\newcommand{\func}[1]{\textproc{#1}}
\newtheorem{claim}{Claim}
\newcommand{\T}{\ensuremath{T^*}}
\newcommand{\Z}{\mathcal{Z}}
\newcommand{\eat}[1]{}
\newcommand{\pos}{{\sc PoS}\xspace}
\newcommand{\poa}{{\sc PoA}\xspace}
\newcommand{\nash}{{\sc NE}\xspace}
\newcommand{\upp}[1]{\lceil c(#1) \rceil}
\newcommand{\low}[1]{\lfloor c(#1) \rfloor}
\newtheorem{fact}{Fact}
\newtheorem{theorem}{Theorem}
\newtheorem{lemma}[theorem]{Lemma}
\begin{document}

\title{On the Price of Stability of Undirected Multicast Games}
\author{Rupert Freeman\thanks{Supported in part by NSF IIS-1527434 and ARO W911NF-12-1-0550.} }
\author{Samuel Haney\thanks{Supported in part by NSF Awards CCF-1527084 and CCF-1535972l.}}
\author{Debmalya Panigrahi\protect\footnotemark[2]}
\affil{Department of Computer Science, Duke University, Durham, NC 27708, USA \\ \texttt{\{rupert,shaney,debmalya\}@cs.duke.edu}}
\date{}

\maketitle

\begin{abstract}
  In multicast network design games, a set of agents choose paths from their source locations to a 
  common sink with the goal of minimizing their individual costs, where the cost of an edge is divided equally among the agents 
  using it. Since the work of Anshelevich et al. (FOCS 2004) that introduced network design games, the main open problem in this field 
  has been the price of stability (PoS) of multicast games. For the special case of broadcast games 
  (every vertex is a terminal, i.e., has an agent), a series of works has culminated in a 
  constant upper bound on the PoS (Bil\`{o} et al., FOCS 2013). However, no significantly sub-logarithmic 
  bound is known for multicast games. In this paper, we make progress toward resolving this question 
  by showing a constant upper bound on the PoS of multicast games for quasi-bipartite graphs. These are 
  graphs where all edges are between two terminals (as in broadcast games) or between a terminal and a 
  nonterminal, but there is no edge between nonterminals. This represents a natural class of intermediate 
  generality between broadcast and multicast games.
  In addition to the result itself, our techniques overcome some of the fundamental difficulties of analyzing the PoS of general multicast games, and are a promising step toward resolving this major open problem.
\end{abstract}

\section{Introduction}
\label{sec:introduction}

In cost sharing network design games, we are given a graph/network $G = (V,E)$ with edge costs and 
a set of users (agents/players) who want to send traffic from their respective source vertices to sink vertices.
Every agent must choose a path along which to route traffic, and the cost of every edge is shared 
equally among all agents having the edge in their chosen path, i.e., using the edge to route traffic.
This creates a \emph{congestion game} since the players benefit from other players choosing the same resources.
A Nash equilibrium is attained in this game when no agent has incentive to unilaterally deviate from her current routing path.
The social cost of such a game is the sum of costs of edges being used in at least one 
routing path, and efficiency of the game is measured by the ratio of the social cost in 
an equilibrium state to that in an optimal state. (The optimal state is defined as one where the 
social cost is minimized, but the agents need not be in equilibrium.) The maximum value of this 
ratio (i.e., for the most expensive equilibrium state) is called the {\em price of anarchy}
of the game, while the minimum value (i.e., for the least expensive equilibrium state) is 
called its {\em price of stability}. It is well known that even for the most restricted settings, 
the price of anarchy can be $\Omega(n)$ for $n$ agents 
(see Figure~\ref{fig:anarchy-lower-bound} for a simple example). Therefore, the main 
question of research interest has been to bound the price of stability (\pos) of this 
class of congestion games.

\begin{figure}[ht]
  \centering
  \includegraphics[scale=.6]{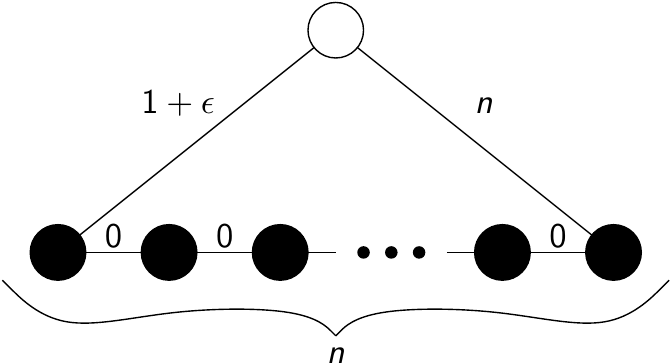}
  \caption{An example with a price of anarchy of $\Omega(n)$. Each black vertex is an agent, and the white vertex is the root (i.e. the common sink). There is a NE where every agent routes through the edge of weight $n$. Each agent has a cost of $1$ in such a configuration. On the other hand, the optimal configuration has a total cost of $1+\epsilon$ where every agent routes through the edge of cost $(1+\epsilon)$.}
  \label{fig:anarchy-lower-bound}
\end{figure}

Anshelevich~{\em et al.}~\cite{AnshelevichDKTWR08} introduced
network design games and obtained a bound of $O(\log n)$ on the \pos
in directed networks with arbitrary source-sink pairs. While this 
is tight for directed networks, they left determining tighter bounds on
the \pos in undirected networks as an open question. Subsequent work has focused
on the case of all agents sharing a common sink (called
{\em multicast games}) and its restricted subclass where every vertex
has an agent residing at it (called {\em broadcast games}). 
These problems are natural analogs of the Steiner tree and minimum
spanning tree (MST) problems in a game-theoretic setting. For broadcast games, 
Fiat~{\em et al.}~\cite{FiatKLOS06} improved the \pos bound to 
$O(\log\log n)$, which was subsequently improved to $O(\log\log\log n)$
by Lee and Ligett~\cite{LeeL13}, and ultimately to $O(1)$ by
Bil\`{o}, Flamminni, and Moscardelli~\cite{BiloFM13}. For multicast games,
however, progress has been much slower, and the only improvement over
the $O(\log n)$ result of Anshelevich~{\em et al.} is a bound of
$O(\log n / \log \log n)$ due to Li~\cite{Li09}. In contrast, the best 
known lower bounds on the PoS of both broadcast and multicast games
are small constants~\cite{BiloCFM13}.
Determining the \pos of multicast games has become one of the 
most compelling open questions in the area of network games. 

In this paper, we achieve progress toward answering this question. 
In the multicast setting, a vertex is said to be a {\em terminal} if it 
has an agent on it, else it is called a {\em nonterminal}. Note that
in the broadcast problem, there are no nonterminals and all the edges
are between terminal vertices. In this paper, we consider multicast
games in {\em quasi-bipartite} graphs: all edges are either
between two terminals, or between a nonterminal and a terminal.
(That is, there is no edge with both nonterminal endpoints.)
This represents a natural setting of intermediate generality between 
broadcast and multicast games. Moreover,
quasi-bipartite graphs have been widely studied for the Steiner tree
problem (see, e.g.,~\cite{RajagopalanV99,RobinsZ05,ChakrabartyDV11,ByrkaGRS13}) 
and has provided insights for the problem on general graphs. 
Our main result is an $O(1)$ bound on the \pos of multicast games in
quasi-bipartite graphs.
\begin{theorem}
\label{thm:main}
	The price of stability of multicast games in quasi-bipartite graphs
	is a constant.
\end{theorem}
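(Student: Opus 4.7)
The plan is to exhibit a Nash equilibrium whose total cost is at most a constant multiple of $\mathrm{cost}(\T)$, where $\T$ is an optimal Steiner tree rooted at the sink. The starting point exploits the defining structural property of quasi-bipartite graphs: every nonterminal Steiner point $s$ appearing in $\T$ has only terminal neighbors, so $\T$ decomposes canonically into (i) edges between pairs of terminals, and (ii) ``Steiner stars'' centered at nonterminals whose leaves are terminals. This reduces the instance to something structurally close to a broadcast instance together with a controlled collection of stars.

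First I would define a contracted instance obtained by replacing each Steiner star with an appropriate set of virtual edges between its terminal leaves and the parent of the nonterminal in $\T$. On the contracted instance all edges are terminal-to-terminal, so one can invoke (or closely adapt) the constant-factor \pos result of Bil\`{o}, Flamminni, and Moscardelli~\cite{BiloFM13} for broadcast games. The bulk of the argument is then to \emph{lift} the resulting NE of the contracted instance back to an NE of the original instance while paying only a constant-factor blowup in cost. I would formalize this via Rosenthal's potential function $\Phi$ together with best-response dynamics initialized from the configuration derived from $\T$, and argue that every improving sequence converges to a state of cost $O(\mathrm{cost}(\T))$.

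The quasi-bipartite property is essential in the lifting step: whenever an agent considers deviating away from a Steiner star, any alternative path either (a) stays on terminal-only edges already handled by the broadcast analysis, or (b) traverses a different nonterminal, and in case (b) the cost of the new route can be charged against the star the agent was using before. This structural dichotomy is exactly what fails in general multicast instances, where a nonterminal can be adjacent to another nonterminal and such a charging scheme would recurse without bound.

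The main obstacle I expect is controlling cascades of deviations through nonterminals. An agent routed through a nonterminal $s$ pays for fractions of two edges $c(u,s)$ and $c(s,v)$, and whether $s$ remains ``worth using'' for that agent depends delicately on how many other agents share it; a single deviation can leave $s$ underused, drive its remaining users away, and force rerouting through other stars or through costly terminal paths. Bounding the amortized effect of such cascades---and in particular ensuring that the potential-based accounting yields a constant rather than a logarithmic blowup---is where I expect the technical heart of the argument to lie, and it is also where the quasi-bipartite restriction must be used most sharply, since a star with only terminal leaves admits a clean ``charge to the terminals it serves'' argument that is unavailable in full generality.
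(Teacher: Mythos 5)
Your proposal takes a genuinely different route from the paper, but it has gaps that look fatal rather than merely technical.

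The most serious gap is that your contraction only touches Steiner stars already present in $\T$, while the central difficulty of multicast (as opposed to broadcast) games is that the dynamics can introduce nonterminals that do not appear in $\T$ at all. When an agent performs a critical move through such a nonterminal $w$, neither endpoint of the newly added edge lies in your contracted metric, so its cost cannot be charged to any Steiner star of $\T$. Your case~(b) sketch --- charging the new route against the star the agent was previously using --- does not bound anything here, because the new edge cost bears no a priori relation to the old star's costs; the whole point of the deviation is that the new nonterminal offers a cheaper route than the star. The paper handles exactly this issue by working with a \emph{dynamic} metric $T^+ = \T \cup \{\sigma_w\}$, where $\sigma_w$ is the cheapest edge incident on nonterminal $w$, together with a dynamically maintained set $\Z_S$ of nonterminals whose current first edge has cost within a constant factor of $c(\sigma_w)$. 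Proving this dynamic metric is feasible (total cost $O(c(\T))$) and that nonterminals outside $\Z_S$ can be charged separately is the main technical content of the paper, and it has no analogue in your outline.

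A second gap is the claim that one can ``argue that every improving sequence converges to a state of cost $O(\mathrm{cost}(\T))$.'' This is false: for arbitrary improving moves starting from $\T$, the Rosenthal potential argument gives only $H(n)=O(\log n)$, and nothing forces a generic improving sequence to do better. The constant bound of Bil\`o et al.\ for broadcast, and this paper for quasi-bipartite multicast, both rely on a \emph{carefully scheduled} sequence --- critical moves interleaved with homogenization and absorption on a specific metric --- and the accounting collapses if you substitute unscheduled best-response dynamics. Finally, the ``lifting'' of a contracted NE back to the original graph is asserted to be the bulk of the argument but is not carried out; a NE of the contracted instance need not remain an equilibrium once agents regain access to nonterminals absent from the contraction, so the lifting step is precisely where the as-yet-unaddressed difficulty resides.
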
	
In Table~\ref{tab:prev}, we summarize the known and new results for single 
sink network design and the corresponding cost sharing games.
\begin{table}[!ht]
  \caption{Single sink Network Design: Optimization and Cost sharing games.\label{tab:prev}}
	\begin{tabular}{|c|c|c|c|}
		\hline
    \multicolumn{2}{|c|}{Optimization (Approximation factor)}  & \multicolumn{2}{c|}{Cost sharing game (Price of Stability)}  \\
		\hline
		MST & Poly-time solvable & Broadcast & $O(1)$~\cite{BiloFM13}\\
		\hline
		Quasi-bipartite Steiner & 1.22~\cite{ByrkaGRS13} & Quasi-bipartite Multicast & $O(1)$ {\bf [This paper]} \\		
		\hline
		Steiner Tree & 1.39~\cite{ByrkaGRS13} & Multicast & $O\left(\frac{\log n}{\log \log n}\right)$~\cite{Li09}\\
		\hline
  \end{tabular}
\end{table}

In addition to the result itself, our techniques overcome some of the fundamental difficulties of analyzing the PoS of general multicast games, and therefore represent a promising step toward resolving this important open problem.
To illustrate this point, we outline the salient features of our analysis below.

The previous \pos bounds for multicast games~\cite{AnshelevichDKTWR08,Li09}
are based on analyzing a potential function $\phi_e$ defined on each edge $e$ as 
its cost scaled by the harmonic of the number of agents using the edge, i.e., $\phi_e = cost(e) \cdot (1+1/2+1/3+\cdots+1/j)$ where $j$ is the number of terminals using $e$.
The overall potential is $\phi = \sum_e \phi_e$.
When an agent changes her routing path (called a {\em move}), 
this potential exactly tracks the change in her shared cost. If the move
is an {\em improving} one, then the shared cost of the agent decreases
and so too does the potential. As a consequence, for an arbitrary sequence 
of improving moves starting with the optimal Steiner tree, the potential 
decreases in each move until a Nash Equilibrium (\nash) is reached.
This immediately yields a \pos bound of $H(n) = O(\log n)$~\cite{AnshelevichDKTWR08}.
To see this, note that the potential of any configuration is bounded below by its cost, and above by its cost times $H(n)$.
Then, letting $S_{NE}$ be the Nash equilibrium state reached, and $T^*$ be the optimal routing tree, we have
\begin{equation*}
  c(S_{NE}) \le \phi(S_{NE}) \le \phi(T^*) \le H(n)c(T^*).
\end{equation*}
This bound was later improved to $O(\log n/\log\log n)$ by Li~\cite{Li09} with a similar but more careful accounting argument.

The previous \pos bounds for broadcast games~\cite{FiatKLOS06,LeeL13,BiloFM13}
use a different strategy. As in the case of multicast games, these results analyze 
a game dynamics that starts with an optimal solution (MST) and ends in an 
\nash. However, the sequence of moves is carefully 
constructed --- the moves are not arbitrary improving moves. 
At a high level, the sequence follows the same pattern in all the previous results for broadcast games:
\begin{enumerate}
  \item Perform a \emph{critical} move: Allow some terminal $v$ to switch its path to introduce a single new edge into the solution, that is not in the optimal routing tree and is adjacent to $v$. This edge is associated with $v$ and denoted $e_v$.
    Any edge introduced by the algorithm in any move other than a critical move uses only edges in the current routing tree, and edges in the optimal routing tree.
    Therefore, we only need to account for edges added by critical moves.
  \item Perform a sequence of moves to ensure that the routing tree is \emph{homogenous}. That is, the difference in costs of a pair of terminals is bounded by a function of the length of the path between them on the optimal routing tree. For example, suppose two terminals $w$ and $w'$ differ in cost by more than the length of the path between them in the optimal routing tree. Then the terminal with larger cost has an improving move that uses this path, and then the other terminal's path to the root. Such a move introduces only edges in the optimal routing tree.
  \item \emph{Absorb} a set of terminals around $v$ in the shortest path metric defined on the optimal tree: terminals $w$ replace their current strategy with the path in the optimal routing tree to $v$, and then $v$'s path to the root. If $w$ had an associated edge $e_w$, introduced via a previous critical move, it is removed from the solution in this step.
\end{enumerate}

The absorbing step allows us to account for the cost of edges added via critical moves, by arguing that vertices associated with critical edges of similar length must be well-separated on the optimal routing tree.
If edges $e_u$ and $e_v$ are not far apart, the second edge to be added would be removed from the solution via the absorbing step.

Homogeneity facilitates absorption:
Suppose $v$ has performed a critical move adding edge $e_v$, and let $w$ be some other terminal.
While $v$ pays $c(e_v)$ to use edge $e_v$, $w$ would only pay $c(e_v)/2$ to use $e_v$, since it would split the cost with $v$. 
That is, if $w$ bought a path to $v$ and then used $v$'s path to the root, it would save at least $c(e_v)/2$ over $v$'s current cost.
If the current costs paid by $v$ and $w$ are not too different, and the distance between $v$ and $w$ not too large, then such a move is improving for $w$.

The previous results differ in how well they can homogenize: the tighter the bound on the difference in costs of a pair of terminals as a function of the length of the path between them in the optimal routing tree, the larger the radius in the absorb step.
In turn, a larger radius of absorption establishes a larger separation between edges with similar cost, which yields a smaller (tighter) bound on the PoS.

This homogenization-absorption framework has not previously been extended to multicast games.
The main difficulty is that there can be nonterminals that are in the routing tree at equilibrium but are not in the optimal tree.
No edge incident on these vertices is in the optimal tree metric, and therefore these vertices cannot be included in the homogenization process.
So, any critical edge incident on such a vertex cannot be charged via absorption.
This creates the following basic problem: what metric can we use for the homogenization-absorption framework that will satisfy the following two properties?
\begin{enumerate}
  \item The metric is feasible -- the sum of all edge costs in (a spanning tree of) the metric is bounded by the cost of the optimal routing tree. These edges can therefore be added or removed at will, without need to perform another set of moves to pay for them (in contrast to critical edges). This allows us to homogenize using these edges.
  \item The metric either includes all vertices (as is the case with the optimal tree metric for broadcast games), or if there are vertices not included in the metric, critical edges adjacent to these vertices can be accounted for separately, outside the homogenization-absorption framework.
\end{enumerate}
We create such a metric for quasi-bipartite graphs, allowing us to extend the homogenization-absorption framework to multicast games.
Our metric is based on a dynamic tree containing all the terminals and a dynamic set of nonterminals.
We show that under certain conditions, we can include the shortest edge incident on a nonterminal vertex, even if it is not in the optimal routing tree, in this dynamic tree.
These edges are added and removed throughout the course of the algorithm.
Our new metric is now defined by shortest path distances on this dynamic tree: the optimal routing tree extended with these special edges.
We ensure homogeneity not on the optimal routing tree, but on this dynamic metric.
Likewise, absorption happens on this new metric.
We define the metric in such a way that the following hold:
\begin{enumerate}
  \item The metric is feasible. That is, the total cost of all edges in the dynamic tree is within a constant factor of the cost of the optimal tree.
  \item Consider some critical edge $e_v$ such that the corresponding vertex $v$ is not in the metric. That is, it was not possible to add the shortest edge adjacent to $v$ to the dynamic tree while keeping it feasible. Therefore, $v$ is at infinite distance from every other vertex in this metric, ruling out homogenization. Then, $e_v$ can be accounted for separately, outside the homogenization-absorption framework.
\end{enumerate}
For the remaining edges $e_v$ such that $v$ is in the metric, we account for them by using the homogenization-absorption framework.
Our main technical contribution is in creating this feasible dynamic metric, going beyond the use of static optimal metrics in broadcast games.
While the proof of feasibility currently relies on the quasi-bipartiteness of the underlying graph, we believe that this new idea of a feasible dynamic metric is a promising ingredient for multicast games in general graphs.

\subsection{Related Work}
\label{sec:related}

Recall that the upper bounds for the \pos are a (large) constant and $O(\frac{\log n}{\log \log n})$ for broadcast and multicast games, respectively.
The corresponding best known lower bounds
are 1.818 and 1.862 respectively by Bil\`{o} {\em et al.}~\cite{BiloCFM13}, 
leaving a significant gap, even for broadcast games. Moreover,
Lee and Ligett~\cite{LeeL13} show that obtaining superconstant lower bounds,
even for multicast games where they might exist, is beyond current techniques.
While this lends credence to the belief that the \pos of multicast games is 
$O(1)$, Kawase and Makino~\cite{KawaseM13} have shown that the potential 
function approach of Anshelevich~{\em et al.}~\cite{AnshelevichDKTWR08} 
cannot yield a constant bound on the \pos, even for broadcast games. In fact,
Bil\`{o} {\em et al.}~\cite{BiloFM13} used a different approach for broadcast
games, as do we for multicast games on quasi-bipartite graphs.

Various special cases of network design games have also been considered.
For small instances ($n=2, 3, 4$), both upper~\cite{ChristodoulouCLPS09} 
and lower~\cite{BiloB11} bounds have been studied.
\cite{ChristodoulouCLPS09} show upper bounds of 1.65 and $4/3$ for two and three players respectively.
For weighted players, 
Anshelevich {\em et al.}~\cite{AnshelevichDKTWR08} showed that pure Nash 
equilibria exist for $n=2$, but the possibility of a corresponding result 
for $n\geq 3$ was refuted by Chen and Roughgarden~\cite{ChenR09}, who also 
provided a logarithmic upper bound on the \pos. An almost matching lower 
bound was later given by Albers~\cite{Albers09}.
Recently, Fanelli {\em et al.}~\cite{FanelliLMS15}, showed that the \pos of network design games on undirected rings is 3/2.

Network design games have also been studied for specific dynamics. In 
particular, starting with an empty graph, suppose agents arrive online and 
choose their best response paths. After all arrivals, agents 
make improving moves until an \nash is reached. The worst-case inefficiency
of this process was determined to be poly-logarithmic by
Charikar~{\em et al.} \cite{CharikarKMNS08}, who also posed the question
of bounding the inefficiency if the arrivals and moves are arbitrarily 
interleaved. This question remains open. Upper and lower bounds for the 
strong \poa of undirected network design games have also been 
investigated~\cite{Albers09,EpsteinFM09}.
They show that the price of anarchy in this setting is $\Theta(\log n)$.

\section{Preliminaries}
\label{sec:prelims}

Let $G = (V,E)$ be an undirected edge-weighted graph and let $c(e)$ denote the cost of edge $e$. Let $U \subseteq V$ be a set of \emph{terminals} and $r \in U$. In an instance of a \emph{network design game}, each terminal $u$ is associated with a \emph{player}, or \emph{agent}, that must select a path from $u$ to $r$. We consider instances in which $G$ is \emph{quasi-bipartite}, that is no edge $e$ has two nonterminal end points.

A \emph{solution}, or \emph{state}, is a set of paths connecting each player to the root. Let $\mathcal{S}$ be the set of all possible solutions. For a solution $S$, a terminal $u$, and some subset $E'$ of the edges in the graph, let $c_u^{E'}(S) = \sum_{e \in E'} c(e)/n_e(S)$ be the cost paid by $u$ for using edges in $E'$, where $n_e(S)$ is the number of players using edge $e$ in state $S$. Let $p_u(S)$ be the set of edges used by $u$ to connect to the root in $S$ and let $c_u(S) = c_u^{p_u(S)}(S)$ be the total cost paid by $u$ to use those edges. For a nonterminal $v$, if every terminal $u$ with $v \in p_u(S)$ uses the same path from $v$ to the root then define $p_v(S)$ to be this path from $v$ to $r$, and $c_v(S) = c_u^{p_v(S)}(S)$.
Additionally, we will sometimes refer to the cost a vertex $v$ pays, even if $v$ is a nonterminal.
By this we mean $c_v(S)$.
For any vertex $v \in S$, let $e_v$ be the edge in $p_v(S)$ with $v$ as an endpoint.

Let $\Phi : \mathcal{S} \to \mathbb{R}_+$ be the potential function introduced by Rosenthal~\cite{Rosenthal73}, defined by
\begin{equation*}
  \Phi(S) = \sum_{e \in E} c(e) H_{n_e(S)} = c(e) \left(1+\frac{1}{2}+\cdots+\frac{1}{n_e(S)}\right).
\end{equation*}
Let $u \in U$ and suppose $S$ and $S'$ are states for which $p_v(S) = p_v(S')$ for all players $v \not= u$. Then $\Phi(S')-\Phi(S) = c_u(S')-c_u(S)$. In particular, if a single player changes their path to a path of lower cost, the potential decreases.

The goal of each player is to find a path of minimum cost. A solution where no player can benefit by unilaterally changing their path is called a \emph{Nash Equilibrium}. Let $\T$ be a solution that minimizes the total cost paid. Note that $\T$ is a minimum Steiner tree for $G$. The \emph{price of stability} (\pos) is the ratio between the minimum cost of a Nash equilibrium and the cost of $\T$.

Let $p_{\T}(u,v)$ be the path in $\T$ between $u$ and $v$.
Let $v_1,\dots,v_n$ be the vertices of $\T$ in the order they appear in a depth first search of $\T$.
Let $MC$, the ``main cycle'', be the concatenation of $p_{\T}(v_1,v_2), p_{\T}(v_2,v_3),\dots, p_{\T}(v_{n-1},v_n), p_{\T}(v_n,v_1)$.
Note that each edge in $\T$ appears exactly twice in $MC$.
The following property will be helpful:

\begin{fact}\label{fct:MC-segment-contains-T-path}
  Any $x$ to $y$ path in $MC$ completely contains $p_{\T}(x,y)$.
\end{fact}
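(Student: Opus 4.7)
The plan is to exploit the tree structure of $T^*$ and the fact that $MC$ is a closed walk entirely on the edges of $T^*$. First I would observe that by construction, each segment $p_{T^*}(v_i, v_{i+1})$ of $MC$ uses only edges of $T^*$, so $MC$ is a closed walk in $T^*$. Consequently, any $x$-to-$y$ subpath of $MC$ is itself a walk in $T^*$ from $x$ to $y$ (possibly repeating vertices and edges, since $MC$ need not be a simple path).

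The key step is the following tree-connectivity argument. Fix any edge $e = (a,b)$ that lies on the unique tree path $p_{T^*}(x,y)$. Removing $e$ from $T^*$ disconnects it into two components $A$ and $B$ with $x \in A$ and $y \in B$, and $e$ is the unique edge joining these components. Therefore, any walk in $T^*$ starting at $x$ and ending at $y$ must traverse $e$ at least once. In particular, the $x$-to-$y$ subpath of $MC$ must contain $e$.

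Applying this to every edge of $p_{T^*}(x,y)$ shows that each such edge appears in the $x$-to-$y$ subpath of $MC$, which is precisely the claim. There is no real obstacle here: the only subtlety is being careful to distinguish a walk (which may repeat edges, since each edge of $T^*$ appears twice in $MC$) from a simple path, but this distinction does not affect the argument since we only need to show containment of the edges of $p_{T^*}(x,y)$, and the tree-cut argument delivers exactly that.
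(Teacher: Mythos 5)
The paper states this as a \emph{Fact} and supplies no proof, treating it as immediate from the structure of the Euler-tour-like sequence $MC$. Your proof is correct and fills in precisely the reasoning being suppressed: since each segment $p_{\T}(v_i,v_{i+1})$ of $MC$ lies in $\T$, any $x$-to-$y$ segment of $MC$ is a walk in $\T$ from $x$ to $y$; for any edge $e$ on $p_{\T}(x,y)$, removing $e$ separates $x$ from $y$ in the tree, so every such walk must traverse $e$. You are also right to flag the walk-versus-simple-path distinction ($MC$ repeats each tree edge twice), and right that it is harmless because the cut argument only needs the segment to be a walk. This is the standard argument one would expect, and there is nothing to reconcile against the paper.

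Your proof is correct and is the natural way to justify the claim; the paper simply omits it.
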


Define the class of edge $e$, $class(e)$, as $\alpha$ if $256^{\alpha} \le c(e) < 256^{\alpha+1}$. Without loss of generality, we assume that $c(e) \ge 1$ for all $e \in E$, so the minimum possible edge class is 0. 
For simplicity, define $\low{e} = 256^{class(e)}$, a lower bound for $c(e)$, and $\upp{e} = 256^{class(e)+1}$, an upper bound for $c(e)$. 

For each nonterminal $v$, let $\sigma_v$ be the minimum cost edge adjacent to $v$ in $G$.
Let $t_v$ be the terminal adjacent to $\sigma_v$.
Let $T^+$ be the extended optimal metric: $T^* \cup \{\sigma_v\}_{v \in V}$.
We maintain a dynamic set of nonterminals
\begin{equation*}
  \Z_S = \left\{ w \notin \T : c(\sigma_w) \le \low{e_w}/64 \right\}.
\end{equation*}
That is, $\Z_S$ are those nonterminals $w$ in solution $S$ whose first edge $e_w$ has cost within a constant factor of the cost of $\sigma_w$
For any $w \in S$, if $\sigma_w$ is added to $S$ while $w \in \Z_S$, then we will show that we will be able to pay for $\sigma_w$ if it remains in the final solution. We prove this fact in Section~\ref{sec:cost-analysis}. 
In the description of the algorithm, we denote the current state by $S_{curr}$. 
For ease of notation, we define $\Z = \Z_{S_{curr}}$.

The remaining definitions are modifications of key definitions from \cite{BiloFM13}. 
The interval around vertex $v \in \T$ with budget $y$, $I_{v,y}$, is the concatenation of its right and left intervals, $I^+_{v,y}$ and $I^-_{v,y}$, where $I^+_{v,y}$ is the maximal contiguous interval in $MC$ with $v$ a left endpoint such that
\begin{equation*}
  2\sum_{\alpha \ge 0} 256^{\alpha+1} H^2_{n_{I^+, \alpha}} \le y,
\end{equation*}
where $n_{I^+, \alpha}$ is the number of edges of class $\alpha$ in $I_{v,y}^+$ (repeated edges are counted every time they appear).
We define $I^-_{v,y}$ similarly.

The \emph{neighborhood of $v$ in state $S$}, $N_S(v)$ is an interval around $v$ as well as certain $w\not\in \T$ with $t_w$ in the interval. Formally,
\begin{equation*}
  N_S(v) =
  \begin{cases}
    I_{v,\frac{\low{e_v}}{56}} \cup \left\{ w \in \Z_S \middle| t_w \in  I_{v,\frac{\low{e_v}}{56}} \text{ and } c(\sigma_w) \le \frac{\low{e_v}}{64} \right\} &\text{  if } v\in\T, \\
    I_{t_v,\frac{\low{e_v}}{56}} \cup \left\{ w \in \Z_S \middle| t_w \in  I_{t_v,\frac{\low{e_v}}{56}} \text{ and } c(\sigma_w) \le \frac{\low{e_v}}{64} \right\} &\text{  otherwise.}
  \end{cases}
\end{equation*}
$N^+_S(v)$ and $N^-_S(v)$ are the right and left intervals of the neighborhood respectively (that is, the portions of $N_S(v)$ to the right and left of $v$ or $t_v$ respectively).
We denote $N_{S_{curr}}(v)$ as $N(v)$.
Roughly speaking, we are going to charge the cost of edges in the final solution not in $\T$ to the interval portions of non-overlapping right neighborhoods.

Observe that every edge in $N(v) \cap \T$ has class at most $class(e_v)-2$. If this were false,
\begin{equation*}
	2\sum_{\alpha>0} 256^{\alpha+1}H^2_{n_{I^+,\alpha}} \ge 256^{class(e_v)} > \frac{\low{e_v}}{56},
\end{equation*}
which contradicts the definition of $N(v)$.
A path $X = p_{\T}(x,y)$ is \emph{homogenous} if
\begin{equation*}
  |c_x(S) - c_y(S)| \le 4 \sum_{\alpha \ge 0} 256^{\alpha+1} H_{n_{X,\alpha}}^2.
\end{equation*}

If $X = p_{\T}(x,y) \subseteq N(v) \cap \T$ is a homogenous path then
\begin{equation*}
|c_x(S) - c_y(S)| \le 4 \sum_{\alpha \ge 0} 256^{\alpha+1} H^2_{n_{X, \alpha}}
				  \le 8 \sum_{\alpha \ge 0} 256^{\alpha+1} H^2_{n_{N^+(v), \alpha}}
				  \le \low{e} / 14.
\end{equation*}

$N(v)$ is homogenous if the following holds:
For all $x,y \in N(v)$ with $x,y \ne u_v$, a special vertex to be defined later, such that the path in $T^+$ from $x$ to $y$ does not contain $v$, $|c_x(S_{curr}) - c_y(S_{curr})| \le \frac{23\low{e_v}}{112}$.
Homogenous neighborhoods allow us to bound the difference in cost between any two vertices in $N(v)$ which will be useful when arguing that players have improving strategy changes.

\section{Algorithm}
\label{sec:algorithm}

The initial state of the algorithm is the minimum cost tree $\T$ connecting all the terminals to the root.
The algorithm carefully schedules a series of potential-reducing moves.
(Recall the potential function $\Phi(S) = \sum_{e \in E} c(e) H_{n_e(S)}$ introduced in Section~\ref{sec:prelims}).
Since there are finitely many states possible, such a series of moves must always be finite.
Since any improving move reduces potential, we must be at a Nash equilibrium if there is no potential reducing move.
These moves are scheduled such that if any edge outside of $\T$ is introduced, it is subsequently accounted for by charging to some part of $\T$.
In particular, we will show that at any point in the process, and therefore in the equilibrium state at the end, the total cost of these edges is bounded by $O(1) \cdot c(\T)$.

\begin{figure}
  \centering
  \includegraphics{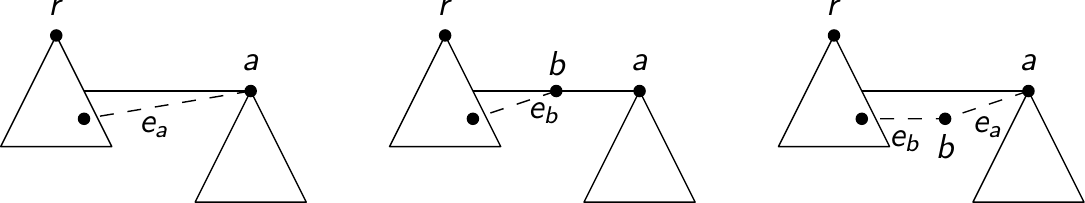}
  \caption{Types of critical improving moves. Dotted edges represent the new edges being added.}
  \label{fig:types-of-moves}
\end{figure}
The algorithm is a series of loops, which we run repeatedly until we reach a Nash equilibrium.
Each loop begins with a terminal, $a$, performing either a \emph{safe} improving move, or a \emph{critical} improving move.
In both cases, $a$ switches strategy to follow a new path to the root.
Let $S$ be the state before the start of the loop.
A safe improving move is one which results in some state $S' \subseteq \T \cup S$, i.e., the new path of $a$ contains edges currently in $S$ and edges in the optimal tree $T^*$.
A safe improving move requires no additional accounting on our part.
A critical improving move on the other hand introduces one or two new edges that must be accounted for (see Figure~\ref{fig:types-of-moves}).
We will show later that in any non-equilibrium state, a safe or critical improving move always exists (see Lemma~\ref{lem:final-state-is-ne}).

The algorithm will use a sequence of (potential-reducing) moves to account for the new edges introduced by a critical move.
At a high level, each of these edges is accounted for in the following way.
Let $e_v$ be the edge in question, and $v$ be the first vertex using $e_v$ on its path to the root.
\begin{enumerate}
  \item In some neighborhood around $v$, perform a sequence of moves to ensure that for every pair of vertices (excluding $v$ and at most one other special vertex), the difference in shared costs of these vertices is not too large. (Recall that the while nonterminals do not pay anything, the shared cost of a nonterminal $u$ is defined to be $c_u(S)$, the cost that a terminal using $u$ pays on its subpath from $u$ to the root). This sequence of moves must be potential-reducing, and cannot add any edges outside of $T^*\cup S$ to the solution.
  \item For every vertex $y$ in the neighborhood around $v$, $v$ has an alternative path to the root consisting of the path in $T^+$ to $y$, and $y$'s path to the root. (Recall from Section~\ref{sec:prelims} that $T^+$ is the optimal tree, $\T$, augmented with minimum cost edges incident on nonterminals $\{\sigma_w : w \text{ is a nonterminal}\}$.)
    \begin{enumerate}
      \item If there is a $y$ for which this alternative path is an improving path for $v$, then $v$ can switch to this new path and $e_v$ will be removed from the solution.
      \item If every path is \emph{not} improving for $v$, then we show that every vertex in the neighborhood of $v$ has an improving move that uses $e_v$.
    \end{enumerate}
\end{enumerate}

These steps ensure that we either remove $e_v$ from the solution, or else for any vertex $y$ in the neighborhood we remove edge $e_y \not\in T^*$ from the solution.
We elaborate on the steps above, referencing the subroutines described in Algorithm~\ref{alg:helper-functions} -- \func{Homogenize}, \func{Absorb}, and \func{MakeTree}:

\textbf{Step 1:} This is accomplished in two ways.
For any path in $\T$, the \func{Homogenize} subroutine ensures that a path in $T^*$ is homogenous.
Recall that this gives a bound (relative to the cost of $e_v$) on the difference in shared costs of the endpoints of the path.
Additionally, for any pair of adjacent vertices, if the difference in the shared costs is more than the cost of the edge between them, then one vertex must have an improving move through this edge.
This move adds no edges outside of $T^*$.
The second way of bounding differences in shared cost is much weaker, but we will use it only a small number of times.
Overall, the path between any two vertices in the neighborhood will comprise homogenous segments connected by edges whose cost is bounded by the second method above.
Adding up the cost bounds for each of these segments gives us the total bound.
Lemma~\ref{lem:cost-after-homogenize-loop} gives the technical details.

\textbf{Step 2(a):} The purpose of this step is to establish that either the shared cost of $v$ is not much larger than the shared cost of every other vertex in its neighborhood, or that we can otherwise remove $e_v$ from the solution.
If the shared cost of $v$ is much larger than some other vertex in the neighborhood, then it is also much larger than the shared cost of an adjacent vertex (call it $q$) in $T^+$. This is because every pair of vertices in the neighborhood have a similar shared cost (by Step 1).
Then, $v$ has a lower cost path to the root consisting of the $(v,q)$ edge, combined with $q$'s current path to the root.
Such a move would remove $e_v$ from the solution.

\textbf{Step 2(b):} If we reach this step, we need to account for the cost of $e_v$ by making every other vertex in the neighborhood give up its first edge, if that edge is not in $T^+$.
This ensures that at the end, the edges in the solution that are not in $T^+$ will be very far apart.
This is accomplished via the \func{Absorb} function: $v$ is currently paying the entire cost of $e_v$, while any vertex that would switch to using $v$'s path to the root would only pay at most half the cost of $e_v$.
Furthermore, if vertices close to $v$ in $T^+$ switch first, vertices farther from $v$ (who must pay a higher cost to buy a path to $v$) will reap the benefits of more sharing, and therefore a further reduction in shared cost.
This is formalized in the definition of \func{Absorb}.

There are some other details which we mention here before moving on to a more formal description of the algorithm:
\begin{itemize}
  \item If $v$ is a nonterminal, let $u_v$ be the terminal that added $v$ as part the critical move. We avoid including $u_v$ in any path provided to the \func{Homogenize} subroutine. This is because \func{Homogenize} switches the strategies of terminals to follow the strategy of some terminal on input path. If terminals were switched to follow $u_v$'s path, this would increase the sharing on $e_v$, when it is required at the beginning of Step (2b) that only one terminal is using $e_v$. When $v$ is a terminal, then $u_v$ is undefined and this problem does not exist. We define two versions of a loop of the algorithm, defined as \func{MainLoop} in Algorithm~\ref{alg:main-loop}, to account for this difference.
  \item We have only described how to account for a single edge, but sometimes a critical move adds two new edges that must be accounted for. Suppose $e_a$ and $e_b$ are the new edges added by $a$ ($a$ is a terminal and $b$ is a nonterminal). Then we run \func{MainLoop}$(e_b)$ first, and then \func{MainLoop}$(e_a)$. The first loop does not increase sharing on $e_a$, so the second loop is still valid.
  \item We assume the existence of a function $\func{MakeTree}$. This function takes as input a set of strategies. Its output is a new set of strategies such that (1) the new set of strategies has lower potential than the old set, (2) the edge set of the new strategies is a subset of the old edge set, and (3) the edge set of the new strategies is a tree.
  In particular, \func{MakeTree}$(S_{curr} \setminus \{p_{u_v}(S_{curr}),p_v(S_{curr})\})$, used on line~\ref{line:make-tree-steiner} does not increase sharing on $e_v$, since $v$ and $u_v$ are the only two vertices using $e_v$ on their path to the root.
  \func{MakeTree}$(S_{curr} \setminus \{p_{u_v}(S_{curr}),p_v(S_{curr})\})$ will also not increase sharing on $e_{u_v}$ if this edge has just been added (and therefore $u_v$ is the only vertex using the edge).
  We will not go into more detail about this function, since an identical function was used in both \cite{BiloFM13} and \cite{FiatKLOS06}.
  \item We assume that all edges in $E$ with $c(e) > c(\T)$ have been removed from the graph.
  This is without loss of generality:
  if the final state $S_f$ is a Nash equilibrium, then $S_f$ is still an equilibrium after reintroducing $e$ with $c(e) > c(\T)$.
  This is because any vertex with an improving move that adds such an edge $e$ also has a path to the root (the path in $\T$) with total cost less than $c(e)$.
\end{itemize}

We walk through the peusdocode next:
We execute the \func{MainLoop} function given in Algorithm~\ref{alg:main-loop} either once or twice, once for each edge not in $\T \cup S$ that is added by a critical move.
If two edges have been added, we execute in the order \func{MainLoop}$(e_b)$ then \func{MainLoop}$(e_a)$ (where $a$ is the terminal and $b$ is the nonterminal).
We define two versions of \func{MainLoop}$(e_v)$, one when $v$ is a terminal, and one when $v$ is a nonterminal, appearing on lines~\ref{line:main-loop-terminal} and \ref{line:main-loop-steiner} respectively.
When $v$ is a nonterminal, we denote the terminal which added $e_v$ to the solution as part of the initial improving move as $u_v$.
For brevity, we define $u_v$ as ``empty'' when $v$ is a terminal.
Thus if $v$ is a terminal, define $N(v) \setminus \{u_v\} = N(v)$.

\begin{algorithm}[h!]
  \caption{Main loop to be executed for each edge added to the solution as part of a critical move.}
  \label{alg:main-loop}
  \begin{algorithmic}[1]
    \Function{MainLoop}{$e_v$} \label{line:main-loop-steiner}
      \Comment{$v$ is a nonterminal and $u_v$ the terminal which added $e_v$ as part of a critical move.}
      \While{any of the following \textbf{if} conditions are true} \label{line:homogenize-loop-steiner}
        \If{$\exists X=p_{\T}(x,y) \in N(v) \cap \T$ with $u_v,v \not\in X$ and $X$ not homogenous} \xspace \Call{Homogenize}{$X$} \label{line:check-homogenous-steiner}
        \EndIf
        \If{$\exists x,y\in N(v) \setminus \left\{v\right\}$ adjacent to $u_v$ with $c_{x}(S_{curr}) - c_{y}(S_{curr}) > c(x,u_v) + c(u_v,y)$} \label{line:check-adjacent-to-u}
          \State Replace $x$'s strategy with $(x,u_v) \cup (u_v,y) \cup p_{y}(S_{curr})$.
        \EndIf
        \If{$\exists w \in N(v) \setminus \T$ such that $t_w \ne v,u_v$  with $\left|c_w(S_{curr}) - c_{t_w}(S_{curr})\right| > c(\sigma_w)$} \label{line:check-cost-w-tw-steiner}
          \State Assuming WLOG $c_{t_w}(S_{curr}) > c_w(S_{curr})$, replace $t_w$'s strategy with $\sigma_w \cup p_w(S_{curr})$. \label{line:switch-sigmav-steiner}
        \EndIf
        \If{$S_{curr} \setminus \{p_{u_v}(S_{curr}),p_v(S_{curr})\}$ is not a tree}
          \State \Call{MakeTree}{$S_{curr} \setminus \{p_{u_v}(S_{curr}),p_v(S_{curr})\}$} \label{line:make-tree-steiner}
        \EndIf
      \EndWhile
      \For{$q \in N(v) \setminus \left\{v,u_v\right\}$ adjacent in $T^+$ to either $v$ or $u_v$} \label{line:deleting-begin-steiner}
        \If{$c(v,q) + c_q(S_{curr}) < c_v(S_{curr})$} \label{line:deleting-condition-steiner}
          \State $v$ changes strategy to $(v,q)\cup p_q(S_{curr})$.
          \State \Return
        \EndIf
      \State Repeat the previous 3 lines substituting $u_v$ for $v$. \\
      \Comment{Note that $u_v$ changing strategy will remove $v$ from the solution.}
      \EndFor \label{line:deleting-end-steiner}
      \State \Call{Absorb}{$v$}
    \EndFunction 
    \item[]
      \Function{MainLoop}{$e_v$} \label{line:main-loop-terminal}
      \Comment{$v$ a terminal.}
      \While{any of the following \textbf{if} conditions are true} \label{line:homogenize-loop-terminal}
        \If{$\exists X=p_{\T}(x,y) \in N(v) \cap \T$ with $v \not\in X$ and $X$ is not homogenous} \xspace \Call{Homogenize}{$X$} \label{line:check-homogenous-terminal}
        \EndIf
        \If{$\exists w \in N(v) \setminus \T$ such that $t_w \ne v$ with $\left|c_w(S_{curr}) - c_{t_w}(S_{curr})\right| > c(\sigma_w)$}  \label{line:check-cost-w-tw-terminal}
          \State Assuming WLOG $c_{t_w}(S_{curr}) > c_w(S_{curr})$, replace $t_w$'s strategy with $\sigma_w \cup p_w(S_{curr})$.  \label{line:switch-sigmav-terminal}
        \EndIf
        \If{$S_{curr} \setminus \left\{p_v(S_{curr})\right\}$ is not a tree} \xspace \Call{MakeTree}{$S_{curr} \setminus \left\{p_v(S_{curr})\right\}$} \label{line:make-tree-terminal}
        \EndIf
      \EndWhile
      \For{$q \in N(v)$ adjacent in $T^+$ to $v$} \label{line:deleting-begin-terminal}
        \If{$c(v,q) + c_q(S_{curr}) < c_v(S_{curr})$} \label{line:deleting-condition-terminal}
          \State $v$ changes strategy to $(v,q)\cup p_q(S_{curr})$.
          \State \Return
        \EndIf
      \EndFor \label{line:deleting-end-terminal}
      \State \Call{Absorb}{$v$}
    \EndFunction 
  \end{algorithmic}
\end{algorithm}

\begin{algorithm}[h]
  \caption{Helper functions for Algorithm~\ref{alg:main-loop}.}
  \label{alg:helper-functions}
  \begin{algorithmic}[1]
	  \setcounter{ALG@line}{24}
    \Function{Homogenize}{$X=p_{\T}(x,y)$}
      \State Let $X = (x = x_1,x_2,\dots,x_k,x_{k+1} = y)$
      \State Let $S'$ be the current state.
      \For{$i \gets 1$ to $k$}
        \For{$j \gets i$ down to $1$}
          \State Change $x_j$'s strategy to $p_{\T}(x_j,x_{i+1}) \cup p_{x_i}(S)$.
        \EndFor
        \If{$\Phi(S_{curr}) < \Phi(S')$} \xspace \Return
        \Else \xspace Reset state to $S'$
        \EndIf
      \EndFor
    \EndFunction
    \item[]
    \Require $c_q(S) \ge c_v(S)-\frac{2 \cdot \low{e_v}}{7} \quad \forall q\in N(v)\setminus \left\{u_v\right\}$ \label{line:absorb-precondition}
    \Comment{See Lemma~\ref{lem:cost-after-deleting}} 
    \Function{Absorb}{$v$}
      \Comment{$v$ absorbs $N(v) \setminus \{u_v\}$}
      \For{$q \in N(v) \cap \T \setminus \left\{u_v\right\}$ in breadth-first order from $r$ according to $\T$}
        \If{$v \not\in \T$} \xspace Change $q$'s strategy along with its descendants to $p_{\T}(q,t_v) \cup \sigma_v \cup p_v(S)$. \label{line:absorb-T}
        \Else \xspace Change $q$'s strategy along with its descendants to $p_{\T}(q,v) \cup p_v(S)$. \label{line:absorb-Z}
        \EndIf
      \EndFor
      \State Let $S'$ be the current state.
      \For{$q \in N(v) \setminus \T$, in reverse breadth-first order from $r$ according to $S'$} 
        \State Change $q$'s strategy along with its descendants to $\sigma_q \cup p_{t_q}(S')$. \label{line:absorb-steiner}
      \EndFor
    \EndFunction
  \end{algorithmic}
\end{algorithm}

The \textbf{while} loops at lines~\ref{line:homogenize-loop-steiner} and \ref{line:homogenize-loop-terminal} terminate with $N(v)$ being homogenous.
For any violated \textbf{if} statement within the \textbf{while} loop, we perform a move that reduces potential, and does not increase sharing on $e_v$, or on $e_{u_v}$ if it was added along with $e_v$ as part of $u_v$'s critical move.
In Lemma~\ref{lem:cost-after-homogenize-loop} we show that if none of these \textbf{if} conditions hold, $N(v)$ is homogenous.
Therefore, this \textbf{while} loop eventually terminates in a homogenous state.

We next use the cost bound given by Lemma~\ref{lem:cost-after-homogenize-loop} to ensure that the cost that $v$ pays is similar to the cost every other vertex in $N(v)$ pays.
If these costs are not close, we show in Lemma~\ref{lem:cost-after-deleting} that the condition at line~\ref{line:deleting-condition-steiner}/\ref{line:deleting-condition-terminal} will be true, and $e_v$ will be deleted from the solution.

If $e_v$ is still present at this point, we finally call the \func{Absorb} function.
Lemma~\ref{lem:cost-after-deleting} ensures that the precondition of the \func{Absorb} function is met.
We use this condition to show that the switches made by all the vertices in $N(v)$ in the \func{Absorb} function are improving, and therefore reduce potential.

Note that although we do not make this explicit, if at any point $S_{curr}$ contains edges that are not part of $p_u(S_{curr})$ for any terminal $u$, these edges are deleted immediately. 
This ensures that any nonterminal in $S_{curr}$ is always used as part of some terminal's path to $r$.

\section{Analysis}
\label{sec:analysis}

In this section, we first prove some properties about the algorithm.
Then we analyze the cost of the final Nash equilibrium.

\subsection{Termination}
\label{sec:correctness}

We first show that all parts of the algorithm reduce potential, guaranteeing that the algorithm terminates (by the definition of the potential function, the minimum decrease in potential is bounded away from 0).

Most steps in the algorithm involve single terminals making improving moves, and therefore these steps reduce potential.
There are two parts of the algorithm for which it is not immediately obvious that potential is reduced: the \func{Homogenize} function and the \func{Absorb} function.
We first show that the \func{Homogenize} function reduces potential.

\begin{theorem}
  Suppose there is a path $X = p_{\T}(x,y) \in N(v)$ which is not homogenous.
  Let $(x = x_1,x_2,\dots,x_k,x_{k+1} = y)$ be the sequence of vertices in $X$.
  Then there exists a prefix of $X$, $(x_1,\dots,x_i)$, such that the sequence of moves in which each $x_j,$ $j \in \{ 1, \ldots, i \},$ switches its strategy to $p_{\T}(x_j,x_{i+1}) \cup p_{x_{i+1}}(S)$ reduces potential.
  \label{thm:prefix-switch-reduces-potential}
\end{theorem}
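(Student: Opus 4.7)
The plan is to proceed by contrapositive: assume that for every $i \in \{1, \ldots, k\}$ the prefix-switch described in the statement fails to reduce $\Phi$, and deduce that the path $X$ is homogenous. Denote by $S^{(i)}$ the state after each $x_j$ with $j \le i$ switches to $p_{\T}(x_j, x_{i+1}) \cup p_{x_{i+1}}(S)$, and suppose $\Phi(S^{(i)}) \ge \Phi(S)$ for every $i$. The goal is then to conclude $|c_{x_1}(S) - c_{x_{k+1}}(S)| \le 4\sum_{\alpha \ge 0} 256^{\alpha+1} H^2_{n_{X,\alpha}}$, which is precisely the homogeneity condition for $X$.

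The first step is to compute $f(i) \define \Phi(S^{(i)}) - \Phi(S)$ in a useful closed form. I would perform the $i$ switches one player at a time in the order $j = i, i-1, \ldots, 1$ and apply the standard single-player identity $\Phi(S') - \Phi(S) = c_u(S') - c_u(S)$, accumulating the cost changes over the sequence. This expresses $f(i)$ as a sum of three edge-wise pieces: (i) the harmonic cost newly incurred on the tree edges of $p_{\T}(x_j, x_{i+1})$ that the movers occupy, (ii) the congestion-sharing savings each $x_j$ enjoys once it joins $x_{i+1}$ and the other movers on $p_{x_{i+1}}(S)$, and (iii) the potential released on the old paths $p_{x_j}(S)$. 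Any incidental overlap of $p_{\T}(x_j, x_{i+1})$ with edges already present in the current solution only strengthens the bound, so for the proof one can assume the new tree edges are disjoint from $S_{curr}$.

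Next, I would translate the hypothesis $f(i) \ge 0$ into a per-level inequality relating $c_{x_i}(S) - c_{x_{i+1}}(S)$ to an edge-class-weighted harmonic sum over $p_{\T}(x_i, x_{i+1}) \subseteq X$: for the move to be non-improving, the harmonic cost paid on the new $\T$-edges in piece (i) must swamp the cost gap the movers hoped to close on $p_{x_{i+1}}(S)$ in piece (ii). Summing telescopically, $\sum_{i=1}^{k} (c_{x_i}(S) - c_{x_{i+1}}(S)) = c_x(S) - c_y(S)$, and grouping the resulting edge contributions by class yields an upper bound of the form $4\sum_{\alpha \ge 0} 256^{\alpha+1} H^2_{n_{X,\alpha}}$. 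The main technical obstacle is the detailed bookkeeping of congestion in the first step --- in particular, the inner telescope $c(e)\sum_{r=1}^{i} 1/(n_e(S)+r)$ on the edges of $p_{x_{i+1}}(S)$ must be matched against a single harmonic factor $H_{n_{X,\alpha}}$, and pairing one such factor from the class-$\alpha$ edge count of $X$ with one factor from this inner sum is what produces the $H^2$ in the final homogeneity bound. Reversing the roles of $x$ and $y$ (i.e., running the argument from the other endpoint) handles the symmetric direction of the absolute value.
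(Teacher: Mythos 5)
Your contrapositive setup, move ordering $j=i,i-1,\dots,1$, and the plan to accumulate the single-player potential identity all match the paper's proof, and your first step (translating $\Phi(S^{(i)}) \ge \Phi(S)$ into a per-prefix cost inequality) is essentially Lemma~\ref{lem:potential-reduction-condition}. However, there is a genuine gap in the second step. What the hypothesis $f(i) \ge 0$ actually yields is a bound on the \emph{aggregate} $\sum_{j=1}^{i}\bigl(c_{x_j}(S)-c_{x_{i+1}}(S)\bigr) \le 4\sum_{j=1}^{i}H_j\,c(e_j)$; it does not give you a ``per-level inequality'' bounding the single difference $c_{x_i}(S)-c_{x_{i+1}}(S)$ in terms of $c(e_i)$, and no simple differencing of consecutive prefix inequalities produces one (the prefix inequalities all point the same way, so subtracting them gives nothing usable). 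Consequently ``summing telescopically'' has nothing to sum: you have $k$ inequalities, each over a growing prefix, and you still need a mechanism to combine them into a single statement about $c_x(S)-c_y(S)$.

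The missing idea is the weighted linear combination: the paper multiplies the $i$-th prefix inequality by $g_i$, where $g_i = \tfrac{1}{i(i+1)}$ for $i<k$ and $g_k = \tfrac{1}{k}$, chosen precisely so that $\sum_{i \ge j} g_i = 1/j$. Rewriting the left side as $\sum_{j\le i} j\,(c_{x_j}(S)-c_{x_{j+1}}(S))$ and swapping sums, the factor $j$ cancels against $\sum_{i\ge j} g_i = 1/j$ and the left side telescopes to $c_x(S)-c_y(S)$; the right side becomes $4\sum_j \tfrac{H_j}{j}c(e_j)$, and summing $H_j/j$ within each edge class is what produces $H^2_{n_{X,\alpha}}$. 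This is also where your explanation of the $H^2$ goes astray: one $H$ factor comes from the movers sharing the new $\T$-edges among themselves (the $\sum_{h\ge j}\tfrac{1}{h-j+1}c(e_h)$ term), and the other from the $g_i$ weights; the savings on $p_{x_{i+1}}(S)$ are handled simply by noting that sharing there only increases, not by matching an ``inner telescope'' $\sum_r \tfrac{1}{n_e(S)+r}$ against a harmonic factor. Without the $g_i$-weighted combination (or an equivalent device) the argument does not close.
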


Note that the order in which the vertices move does not affect the change in potential of the entire sequence of moves. However, to help prove the theorem, we will assume that the vertices execute these moves in the order $x_i,x_{i-1},\dots,x_1$, the order given in Algorithm~\ref{alg:helper-functions}.
Let $e_j = (x_j,x_{j+1})$.
Let $S$ be the state before the prefix move starts, and let $S_j$ be the state just after $x_j$ switches its strategy.
Note that $S_{j+1}$ is the state just before $x_j$ switches.

\begin{lemma}
  The prefix move given in Theorem~\ref{thm:prefix-switch-reduces-potential} for prefix $(x_1,\dots,x_i)$ does not reduce potential only if
  \begin{equation*}
    \sum_{j=1}^i c_{x_j}(S) - c_{i+1}(S) \le 2 \sum_{j=1}^i 2 H_j c(e_j).
  \end{equation*}
  \label{lem:potential-reduction-condition}
\end{lemma}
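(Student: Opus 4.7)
The plan is to evaluate $\Phi(S_1) - \Phi(S)$ edge-by-edge (since $\Phi$ is a state function, its change telescopes to the sum of cost changes of the moving players and is independent of the order in which they move), and to show that non-negativity of this change forces the stated inequality. I would begin by classifying which edges have increased usage in $S_1$ versus $S$ and which have decreased usage.

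The only edges whose usage can grow are the spine edges $e_l = (x_l, x_{l+1})$ for $l = 1, \dots, i$ and the edges of $p_{x_{i+1}}(S)$. On $e_l$, at most $l$ new players (those $x_j$ with $j \le l$) can join, so its contribution to $\Phi(S_1) - \Phi(S)$ is at most $c(e_l)\,(H_{n_{e_l}(S)+l} - H_{n_{e_l}(S)}) \le c(e_l)\,H_l$ by the standard inequality $H_{n+k}-H_n\le H_k$. On each $e \in p_{x_{i+1}}(S)$, at most $i$ new players join and $n_e(S) \ge 1$ since $x_{i+1}$ already uses it; using $H_{n+i} - H_n \le i/n$, the contribution is at most $i\cdot c(e)/n_e(S)$, and summing over the path gives at most $i\cdot c_{x_{i+1}}(S)$.

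On the savings side, when $x_j$ abandons $e \in p_{x_j}(S)$ the potential drops by at least $c(e)/n_e(S)$, and multiple movers leaving the same edge only amplify the savings; thus the total savings from the movers' abandoned paths are at least $\sum_j c_{x_j}(S)$. Combining the bounds,
\begin{equation*}
\Phi(S_1) - \Phi(S)\ \le\ \sum_{l=1}^i H_l\, c(e_l)\ +\ i\cdot c_{x_{i+1}}(S)\ -\ \sum_{j=1}^i c_{x_j}(S),
\end{equation*}
so if the left-hand side is non-negative we obtain $\sum_{j=1}^i \bigl(c_{x_j}(S) - c_{x_{i+1}}(S)\bigr) \le \sum_{l=1}^i H_l\, c(e_l)$, which yields the claimed inequality (with substantial slack for the factor $4$). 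The main obstacle I anticipate is bookkeeping for edges that lie simultaneously on a mover's old path and on the new target $p_{\T}(x_j,x_{i+1}) \cup p_{x_{i+1}}(S)$, or on the old paths of two different movers: the ``pay $H_l$ per joiner, save $c_{x_j}(S)$ per leaver'' decoupling is then imperfect. My plan is to absorb these cross-terms into the extra factor of $2$ visible on the right-hand side of the lemma, rather than tracking them exactly.
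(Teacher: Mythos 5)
Your approach is genuinely different from the paper's and, with one correction to your reasoning, it actually works and yields a \emph{tighter} bound than the paper's. The paper compares consecutive states $S_{j+1}$ and $S_j$ for each mover $x_j$, partitions $x_j$'s edge set into four pieces ($E_{1,j}$ through $E_{4,j}$), tracks $c_{x_{i+1}}(S_{j+1}) - c_{x_j}(S_{j+1})$, and then telescopes; this is what produces the slack constant $4$. You instead compare $S$ to $S_1$ globally, one edge at a time, using $\Phi(S_1)-\Phi(S)=\sum_e c(e)\bigl(H_{n_e(S_1)}-H_{n_e(S)}\bigr)$. This is cleaner and more elementary, and the cancellation it exploits gives the inequality with constant $1$, which of course implies the paper's statement.

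However, your stated resolution of the cross-term problem is wrong, and the literal sentence ``the total savings from the movers' abandoned paths are at least $\sum_j c_{x_j}(S)$'' is false: $x_j$ does not abandon the edges of $p_{x_j}(S)$ that lie on its new path, so the true savings from genuinely abandoned edges is only a fraction of $c_{x_j}(S)$. Moreover, the shortfall (roughly $\sum_j c_{x_j}^{p_{x_j}(S)\cap p_{x_{i+1}}(S)}(S)$) can be on the order of $i\cdot c_{x_{i+1}}(S)$, so it cannot be ``absorbed into the extra factor of $2$'' on $\sum_l H_l\, c(e_l)$. The correct observation is that your gain count is overstated by \emph{exactly the same amount}: for an edge $e$ used by $a_e$ movers in $S$ and $b_e$ movers in $S_1$, one has $n_e(S_1)=n_e(S)-a_e+b_e$, and for $n\ge a\ge 0$, $n\ge 1$,
\begin{equation*}
  H_{n-a+b}-H_{n}\ \le\ \bigl(H_{n-a+b}-H_{n-a}\bigr)-\bigl(H_{n}-H_{n-a}\bigr)\ \le\ \min\!\left\{H_b,\ \tfrac{b}{n}\right\}-\tfrac{a}{n},
\end{equation*}
using $H_{m+k}-H_m\le H_k$ and $H_{m+k}-H_m\le k/m$ for the first part and $H_n-H_{n-a}\ge a/n$ for the second. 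Apply the $H_b$ bound on spine edges ($b_e=l$ for $e=e_l$) and the $b/n$ bound on edges of $p_{x_{i+1}}(S)$ (where $b_e=i$ and $n_e(S)\ge 1$); edges outside both sets have $b_e=0$. Summing, the $-a_e\,c(e)/n_e(S)$ terms aggregate to $-\sum_j c_{x_j}(S)$ by a double-count, and you recover your displayed inequality, now with an honest derivation. So: keep the edge-by-edge accounting, but drop the ``absorb into the constant'' idea in favor of this exact cancellation.
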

\begin{proof}
  The change in potential caused by $x_j$'s switch is $c_{x_j}(S_{j}) - c_{x_j}(S_{j+1})$.
  Partition the edges of $x_j$'s strategy in $S$ into 3 sets: edges in $p_{\T}(x_j,x_{i+1})$, edges in $p_{x_{i+1}}(S)$, and all other edges, called $E_{1,j}$, $E_{2,j}$, and $E_{3,j}$ respectively. Additionally, let $E_{4,j}$ be the remaining edges in $x_{i+1}$'s strategy: those not in $E_{2,j}$. 
  For edge $e$, let $n_e(S)$ be the number of players using edge $e$ in state $S$.
  \begin{itemize}
    \item $c^{E_{1,j}}_{x_j}(S_{j+1}) \ge 0 = c^{E_{1,j}}_{x_j}(S) - \sum_{e \in E_{1,j}} \frac{1}{n_e(S)} c(e)$, since $c^{E_{1,j}}_{x_j}(S) = \sum_{e \in E_{1,j}} \frac{1}{n_e(S)} c(e)$. 
    \item $c^{E_{2,j}}_{x_j}(S_{j+1}) - c^{E_{2,j}}_{x_{i+1}}(S_{j+1}) = c^{E_{2,j}}_{x_j}(S) - c^{E_{2,j}}_{x_{i+1}}(S)=0$ since all edges in $E_2$ are shared.
    \item $c^{E_{3,j}}_{x_j}(S_{j+1}) \ge c^{E_{3,j}}_{x_j}(S)$ since $x_j$'s cost on this edge set has not decreased, since no sharing has been added.
    \item $c^{E_{4,j}}_{x_{i+1}}(S_{j+1}) \le c^{E_{4,j}}_{x_{i+1}}(S)$ since sharing on these edges only increases.
  \end{itemize}
  These facts give us
  \begin{align*}
    c_{x_{i+1}}(S_{j+1}) - c_{x_j}(S_{j+1}) &= c^{E_{4,j}}_{x_{i+1}}(S_{j+1}) + c^{E_{2,j}}_{x_{i+1}}(S_{j+1}) - c^{E_{1,j}}_{x_j}(S_{j+1}) - c^{E_{2,j}}_{x_j}(S_{j+1}) - c^{E_{3,j}}_{x_j}(S_{j+1}) \\
    \le c^{E_{4,j}}_{x_{i+1}}(S) &+ c^{E_{2,j}}_{x_{i+1}}(S) - c^{E_{2,j}}_{x_j}(S) - c^{E_{1,j}}_{x_j}(S) + \sum_{e \in E_{1,j}} \frac{1}{n_e(S)} c(e) - c^{E_{3,j}}_{x_j}(S) \\
    &=  c_{x_{i+1}}(S) - c_{x_j}(S) + \sum_{e \in E_{1,j}} \frac{1}{n_e(S)} c(e).
  \end{align*}
  Then,
  \begin{align*}
    c_{x_j}(S_{j}) - c_{x_j}(S_{j+1}) &\le \sum_{h=j}^i 2\frac{1}{h-j+1} c(e_h) + c_{x_{i+1}}(S_{j+1}) - c_{x_j}(S_{j+1}) \\
    &\le  c_{x_{i+1}}(S) - c_{x_j}(S) + \sum_{e \in E_{1,j}} \frac{1}{n_e(S)} c(e) + \sum_{h=j}^i 2\frac{1}{h-j+1} c(e_h).
  \end{align*}
  Summing over all $j$ gives
  \begin{align*}
    \sum_{j=1}^i \left( c_{x_j}(S_j) - c_{x_j}(S_{j+1})  \right)\ge 0 \implies \\ \sum_{j=1}^i \left( c_{x_j}(S) - c_{x_{i+1}}(S)  \right)&\le \sum_{j=1}^i 2\sum_{h=j}^i \frac{1}{h-j+1} c(e_h) + \sum_{j=1}^i \sum_{e \in E_{1,j}} \frac{1}{n_e(S)} c(e) \\
    &\le 2\sum_{j=1}^i H_j c(e_j) + \sum_{j=1}^i c(e_j) \le 4\sum_{j=1}^i H_j c(e_j). \qedhere
  \end{align*}
\end{proof}

Assume that for all $i$, $1\le i \le k$, the prefix move given in Theorem~\ref{thm:prefix-switch-reduces-potential} for prefix $(x_1,\dots,x_i)$ does not reduce potential.
Then we show that $p_{\T}(x,y)$ is homogenous.
Let 
\begin{equation*}
  g_j =
  \begin{cases}
    \frac{1}{j(j+1)} &\text{ for } j < k \text{,  and} \\
    \frac{1}{j} &\text{ for } j = k. \\
  \end{cases}
\end{equation*}
Note that $\sum_{j=i}^k g_j = 1/i$ for all $i \le k$.
From Lemma~\ref{lem:potential-reduction-condition} we have 
\begin{align*}
  &\sum_{j=1}^i c_{x_j}(S) - c_{x_{i+1}}(S) \le 2 \sum_{j=1}^i 2 H_j c(e_j) \quad \forall i, 1\le i \le k \\
  \implies &\sum_{i=1}^k \left[ g_i \sum_{j=1}^i c_{x_j}(S) - c_{x_{i+1}}(S) \right] \le \sum_{i=1}^k 2 g_i \sum_{j=1}^i 2 H_j c(e_j).
\end{align*}
Rewriting $\sum_{j=1}^i c_{x_j}(S) - c_{x_{i+1}}(S)$ as $\sum_{j=1}^i \sum_{h=j}^i c_{x_h}(S) - c_{x_{h+1}}(S)$ and rearranging, we obtain
\begin{align*}
  &\sum_{i=1}^k \left[ g_i \sum_{j=1}^i j \left( c_{x_j}(S) - c_{x_{j+1}}(S)  \right)\right] \le \sum_{i=1}^k 2 g_i \sum_{j=1}^i 2 H_j c(e_j).
\end{align*}
Rearranging the sums, we get 
\begin{align*}
  &\sum_{j=1}^k \left[ j \left( c_{x_j}(S) - c_{x_{j+1}}(S) \right) \sum_{i=j}^k g_i \right] \le 4\sum_{j=1}^k H_j c(e_j) \sum_{i=j}^k g_i \\
  \implies &\sum_{j=1}^k c_{x_j}(S) - c_{x_{j+1}}(S) \le 4\sum_{j=1}^k \frac{H_j}{j} c(e_j) \\
  \implies &c_{x}(S) - c_{y}(S) \le 4\sum_{\alpha\ge 0} \left[ 256^{\alpha+1} \sum_{j=1}^{n_{\alpha,X}} \frac{H_j}{j}\right] \le  4\sum_{\alpha\ge 0} 256^{\alpha+1}  H^2_{n_{\alpha,X}}.
\end{align*}
This completes the proof of Theorem~\ref{thm:prefix-switch-reduces-potential}.

Before we can analyze the \func{absorb} function, we need to show that the precondition for the \func{Absorb} function (see just before line~33) is satisfied.
We state and prove the precondition in Lemma~\ref{lem:cost-after-deleting}.
The proof of Lemma~\ref{lem:cost-after-deleting} requires a homogenous solution, so we first prove homogeneity in the follow lemma (Lemma~\ref{lem:cost-after-homogenize-loop}).
\begin{lemma}
  \label{lem:cost-after-homogenize-loop}
  When the \textbf{while} loops on lines~\ref{line:homogenize-loop-steiner} and \ref{line:homogenize-loop-terminal} terminate, $N(v)$ is homogenous.
\end{lemma}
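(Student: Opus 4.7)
The plan is to fix arbitrary $x, y \in N(v) \setminus \{u_v\}$ whose $T^+$-path $P$ avoids $v$, and bound $|c_x(S_{curr}) - c_y(S_{curr})|$ by triangle inequality along $P$. I decompose $P$ as a (possibly empty) leading $\sigma_x$-edge, a middle $\T$-subpath $P' = p_{\T}(t'_x, t'_y)$ where $t'_x = x$ if $x \in \T$ and $t'_x = t_x$ otherwise (symmetrically for $y$), and a (possibly empty) trailing $\sigma_y$-edge. Fact~\ref{fct:MC-segment-contains-T-path} applied to the interval portion of $N(v)$ gives $P' \subseteq N(v) \cap \T$, which is what lets me invoke the homogeneity inequality derived in the excerpt.

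For the $\sigma$-hops, the neighborhood definition forces any $w \in N(v) \setminus \T$ to satisfy $c(\sigma_w) \le \low{e_v}/64$. Because $P$ avoids $v$ we have $t_x \ne v$ and $t_y \ne v$, so provided also $t_x \ne u_v$ (resp.\ $t_y \ne u_v$), the termination of the conditional on line~\ref{line:check-cost-w-tw-steiner}/\ref{line:check-cost-w-tw-terminal} yields $|c_x - c_{t_x}| \le c(\sigma_x) \le \low{e_v}/64$, and symmetrically for $y$. For the middle subpath $P'$: if $v$ is a terminal, or $v$ is a nonterminal and $u_v \notin P'$, then the termination of line~\ref{line:check-homogenous-steiner}/\ref{line:check-homogenous-terminal} makes $P'$ homogenous, so the chain of inequalities stated just after the definition of ``homogenous path'' gives $|c_{t'_x} - c_{t'_y}| \le \low{e_v}/14$. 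Otherwise ($v$ a nonterminal and $u_v \in P'$), I split $P'$ at $u_v$ into two homogenous sub-paths (each contributing $\le \low{e_v}/14$) joined by a single jump between the $\T$-neighbors $x_1, y_1$ of $u_v$ on $P'$; line~\ref{line:check-adjacent-to-u} bounds this jump by $c(x_1, u_v) + c(u_v, y_1)$, and since both edges lie in $N(v) \cap \T$ the class observation in the excerpt gives each a cost at most $\low{e_v}/256$, so the jump contributes at most $\low{e_v}/128$.

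Summing the worst case --- two $\sigma$-edges, two homogenous halves of $P'$, and one $u_v$-jump --- gives $\frac{2}{64} + \frac{2}{14} + \frac{1}{128} = \frac{163}{896} \le \frac{23}{112}$, matching the target bound; when $u_v$ does not lie on $P$, dropping one of the $\frac{1}{14}$ terms and the $\frac{1}{128}$ term yields the strictly smaller total $\frac{23}{224}$. Each piece of the decomposition is present in at most one of the sub-cases above, so the accounting never double-counts.

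The main obstacle is the corner case $t_x = u_v$ (or symmetrically $t_y = u_v$), where line~\ref{line:check-cost-w-tw-steiner} does not fire --- it explicitly excludes $t_w = u_v$, so $|c_x - c_{u_v}|$ cannot be bounded in isolation by $c(\sigma_x)$. My fix is to never isolate that hop: I combine $\sigma_x$ with the next edge of $P$ leaving $u_v$ (either $\sigma_y$ if $P$ degenerates to $x \to u_v \to y$, or the first $\T$-edge of $P'$) into a single application of line~\ref{line:check-adjacent-to-u}, since $x$ and that next vertex $z$ are both graph-neighbors of $u_v$ lying in $N(v) \setminus \{v\}$. This yields a joint bound $c_x - c_z \le c(\sigma_x) + c(u_v, z) \le \low{e_v}/64 + \low{e_v}/256$, after which the remaining accounting proceeds as above, and the slack between $163/896$ and $184/896$ comfortably absorbs the reorganization.
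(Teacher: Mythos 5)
Your proof follows the paper's essentially exactly: the same decomposition of the $T^+$-path into $\sigma$-hops (bounded by the termination of line~\ref{line:check-cost-w-tw-steiner}/\ref{line:check-cost-w-tw-terminal}), homogenous $\T$-subsegments (each contributing at most $\low{e_v}/14$), and a jump across $u_v$ bounded via line~\ref{line:check-adjacent-to-u}, with the same arithmetic $\frac{2}{64}+\frac{2}{14}+\frac{1}{128}=\frac{163}{896}\le\frac{23}{112}$. The one thing you do beyond what the paper writes is handle the corner case $t_x=u_v$ explicitly --- the paper's proof silently assumes $P=(x,t_x,\dots,u_l,u_v,u_r,\dots,t_y,y)$ with $t_x\ne u_v$, even though line~\ref{line:check-cost-w-tw-steiner} indeed excludes $w$ with $t_w=u_v$ --- and your fix of folding the $\sigma_x$-hop into a single application of line~\ref{line:check-adjacent-to-u} (using that $x$ and the next vertex past $u_v$ are both in $N(v)\setminus\{v\}$ and adjacent to $u_v$) is sound and stays comfortably inside the $23\low{e_v}/112$ budget.
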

\begin{proof}
  Let $S = S_{curr}$.
  Let $x,y \in N(v)$ such that the path in $T^+$ from $x$ to $y$ does not contain $v$.
  Let $P$ be this path.
  In the simplest case, both $x$ and $y$ are in $\T$, and $u_v$, if it exists, does not lie in $P$.
  Then, we can bound the cost difference between $x$ and $y$ using the homogenous property, guaranteed by line~\ref{line:check-homogenous-steiner}/\ref{line:check-homogenous-terminal} of the \func{MainLoop} function.
  That is, $|c_x(S) - c_y(S)| \le \frac{\low{e_v}}{14}$.
  Next let us consider the general case, in which both $x$ and $y$ are nonterminals, and $u_v\in P$.
  Let $P = (x,t_x,\dots,u_l,u_v,u_r,\dots,t_y,y)$.
  We have the following bounds:
  \begin{equation*}
	|c_x(S) - c_{t_x}(S)| \le \frac{\low{e_v}}{64}, \qquad |c_{t_x}(S) - c_{u_l}(S)| \le \frac{\low{e_v}}{14}, \qquad  |c_{u_l}(S) - c_{u_r}(S)| \le \frac{2\low{e_v}}{256}
	\end{equation*}
	\begin{equation*}
	|c_{t_y}(S) - c_{u_r}(S)| \le \frac{\low{e_v}}{14}, \qquad |c_y(S) - c_{t_y}(S)| \le \frac{\low{e_v}}{64}.
  \end{equation*}
  Combining these ensures $|c_x(S) - c_y(S)| \le \frac{23\low{e_v}}{112}$.
\end{proof}

We use Lemma~\ref{lem:cost-after-homogenize-loop} to show the precondition for the \func{Absorb} function.
\begin{lemma}\label{lem:cost-after-deleting}
	If the \textbf{for} loops on line~\ref{line:deleting-begin-steiner}/\ref{line:deleting-begin-terminal} terminate without the algorithm returning, then
	\begin{equation*}
		c_p(S_{curr}) \ge c_v(S_{curr})-\frac{2 \cdot \low{e_v}}{7}
  \end{equation*}
  for all $p \in N(v), p \ne u_v$.
\end{lemma}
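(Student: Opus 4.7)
The plan is to argue by contradiction. Suppose some $p \in N(v)$ with $p \neq u_v$ satisfies $c_p(S_{curr}) < c_v(S_{curr}) - 2\low{e_v}/7$ (the case $p = v$ gives $0 < -2\low{e_v}/7$, which is false, so I may also assume $p \neq v$). I will exhibit a vertex $q$ considered by the \textbf{for} loop on line~\ref{line:deleting-begin-steiner}/\ref{line:deleting-begin-terminal} for which the line~\ref{line:deleting-condition-steiner}/\ref{line:deleting-condition-terminal} inequality (or its substituted $u_v$-version) holds, contradicting the hypothesis that the loop exited without returning.

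To locate $q$, I traverse the unique $T^+$-path $p = r_0, r_1, \ldots, r_m = v$ (so $m \ge 1$). If $r_{m-1} \neq u_v$ (automatic when $v$ is a terminal, since then $u_v$ is empty), set $q := r_{m-1}$; this is adjacent to $v$ in $T^+$. Otherwise $r_{m-1} = u_v$, which forces $m \ge 2$ since $p \neq u_v$, and I set $q := r_{m-2}$; this is adjacent to $u_v$ in $T^+$. In both cases $q \in N(v) \setminus \{v, u_v\}$: by Fact~\ref{fct:MC-segment-contains-T-path} the $T^*$-portion of the $T^+$-path lies inside the $MC$-interval used to define $N(v)$, and any initial $\sigma_p$-hop (when $p$ is a nonterminal in $\Z_S$) lands at $t_p$, which the definition of $N(v)$ already places inside that interval.

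Next I apply the homogeneity of $N(v)$ given by Lemma~\ref{lem:cost-after-homogenize-loop}: since $p, q \in N(v) \setminus \{u_v\}$ and the $T^+$-path from $p$ to $q$ is a proper prefix of $p \to v$ and therefore avoids $v$, I obtain $|c_p(S_{curr}) - c_q(S_{curr})| \le 23\low{e_v}/112$. Combined with the standing assumption and the arithmetic $2/7 - 23/112 = 9/112$, this yields $c_q(S_{curr}) < c_v(S_{curr}) - 9\low{e_v}/112$.

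Finally, I bound the edge connecting $q$ to $v$ (or to $u_v$) using the observation immediately following the definition of $N(v)$: every $T^*$-edge in $N(v)$ has class at most $class(e_v)-2$ and hence cost at most $\low{e_v}/256$, while every $\sigma_w$-edge with $w \in \Z_S \cap N(v)$ has cost at most $\low{e_v}/64$ by the definition of $N(v)$. Either way the relevant edge has cost at most $\low{e_v}/64 < 9\low{e_v}/112$. Hence in the first case $c(v,q) + c_q(S_{curr}) < c_v(S_{curr})$, matching the condition on line~\ref{line:deleting-condition-steiner}/\ref{line:deleting-condition-terminal}, and in the second case $c(u_v,q) + c_q(S_{curr}) < c_{u_v}(S_{curr})$ follows using $c_{u_v}(S_{curr}) = c(u_v,v) + c_v(S_{curr}) \ge c_v(S_{curr})$, since $u_v$ is the only terminal on the newly-added edge $(u_v, v)$ and thereafter shares $v$'s path to the root. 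I expect the main obstacle to be the case analysis around $u_v$ sitting on the $T^+$-path, ensuring that $q \in N(v)$ in each subcase and that the substituted ``$u_v$'' iteration of the \textbf{for} loop fires with the correct sign in its cost comparison.
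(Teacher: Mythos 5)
Your proof is correct and follows essentially the same strategy as the paper's: argue by contradiction, walk the $T^+$-path from $p$ to $v$ to locate a vertex $q$ adjacent in $T^+$ to $v$ (or, if $u_v$ intervenes, to $u_v$), apply Lemma~\ref{lem:cost-after-homogenize-loop} to transfer the cost gap to $q$, and bound the connecting edge to show the \textbf{for} loop condition on line~\ref{line:deleting-condition-steiner}/\ref{line:deleting-condition-terminal} (or its $u_v$-substituted version) must have fired. Your treatment is in fact slightly more explicit than the paper's in two places: you verify $q \in N(v)\setminus\{v,u_v\}$ via Fact~\ref{fct:MC-segment-contains-T-path}, and you replace the paper's flat assertion $c(v,q)\le\low{e_v}/256$ (valid only for $T^*$-edges) with the weaker but more broadly applicable bound $\low{e_v}/64$ that also covers the $\sigma_w$-edges of $\Z_S$-vertices in $N(v)$, which still suffices since $1/64 < 9/112$.
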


\begin{proof}
  Let $S = S_{curr}$.
  Again, we suppose that $u_v$ is not adjacent to $v$ in $\T$.
  Suppose the claim does not hold, that is, $c_v(S) - c_p(S) > \frac{2 \cdot \low{e_v}}{7}$ for some $p \in N(v)$
  Let $P = (p,\dots,q,v)$ be the path from $p$ to $v$ in $T^+$ and suppose $q \not= u_v$.
  By Lemma~\ref{lem:cost-after-homogenize-loop}, 
  \begin{equation*}
    |c_p(S) - c_q(S)| \le \frac{23\low{e_v}}{112} \implies c_v(S) - c_q(S) > \frac{9\low{e_v}}{112}.
  \end{equation*}
  But, the cost of edge $(v,q)$ is at most $\frac{\low{e_v}}{256} < \frac{9\low{e_v}}{112}$, so $v$ must have had an improving move in line~\ref{line:deleting-condition-steiner}/\ref{line:deleting-condition-terminal}.
  
  Now suppose $u_v$ is adjacent to $v$ in $\T$ and lies in $P$. Then $P = (p, \ldots, q, u_v, v)$.
  By Lemma~\ref{lem:cost-after-homogenize-loop}, 
  \begin{equation*}
    |c_p(S) - c_q(S)| \le \frac{23\low{e_v}}{112} \implies c_v(S) - c_q(S) > \frac{9\low{e_v}}{112}.
  \end{equation*}
  But, the cost of edge $(v,u_v)$ is at most $\frac{\low{e_v}}{256}$ (since this edge is in $\T$), and the same is true of the edge $(u_v,q)$. Therefore $v$ can switch its path to $(v,u_v) \cup (u_v,q) \cup p_q(S)$ and pay cost at most $2\frac{\low{e_v}}{256} + c_q(S) < \frac{9\low{e_v}}{112} + c_q(S)<c_v(S)$. Note that this improving move for $v$ implies an improving move for $u_v$ in line~\ref{line:deleting-condition-steiner} where $p_{u_v}(S)$ no longer includes $v$ but instead consists of $(u_v,q) \cup p_q(S)$. 
\end{proof}

Now that we have shown the precondition is satisfied, we must show that the \func{absorb} function reduces potential.

\begin{theorem} \label{thm:absorb-reduces-potential}
	If $c_q(S_{curr}) \ge c_v(S_{curr}) - \frac{2 \cdot \low{e_v}}{7}$ for all $q \ne u_v \in N(v)$, then every strategy change in \func{Absorb} reduces potential.
\end{theorem}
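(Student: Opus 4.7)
I will show that each batch strategy change in \func{Absorb} strictly decreases the Rosenthal potential $\Phi$. The approach is to decompose each iteration — where a vertex $q$ changes strategy ``along with its descendants'' in the current routing tree — into a sequence of single-player moves: the head $q$ first, then its subtree in top-down order for the first loop (phase~1), and in reverse BFS order across the absorbed nonterminals for the second loop (phase~2). Since Rosenthal's identity makes $\Delta\Phi$ equal to the mover's own cost change for each single-player switch, it suffices to verify that every such switch is strictly cost-improving.

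\textbf{Phase 1.} For $q \in N(v)\cap\T\setminus\{u_v\}$ switching to $P_q := p_{\T}(q,v)\cup p_v(S_{curr})$ (or $p_{\T}(q,t_v)\cup\sigma_v\cup p_v(S_{curr})$ if $v\notin\T$), two facts drive the cost bound. First, the detour portion of $P_q$ lies in $N(v)\cap\T$ (plus $\sigma_v$ of cost $\le\low{e_v}/64$ when $v\in\Z$); the neighborhood-budget inequality, together with the class bound that every edge in $N(v)\cap\T$ has class at most $class(e_v)-2$, shows the detour's total cost is a small fraction of $\low{e_v}$. Second, on $p_v(S_{curr})$, $q$ joins $v$ so sharing on $e_v$ rises from $1$ to at least $2$, and $q$'s share of $p_v(S_{curr})$ is at most $c_v(S_{curr}) - c(e_v)/2 \le c_v(S_{curr}) - \low{e_v}/2$. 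Combining with the precondition $c_q(S_{curr}) \ge c_v(S_{curr}) - 2\low{e_v}/7$, the sharing savings $\low{e_v}/2 - 2\low{e_v}/7 = 3\low{e_v}/14$ dominates the detour, giving $c_q(\mathrm{new}) < c_q(S_{curr})$. For each descendant $d$ of $q$ moving next, its subpath down to $q$ has unchanged sharing while its share on $P_q$ is strictly smaller than $q$'s was; combined with $c_d(S_{curr}) \ge c_q(S_{curr})$ (both share $p_q(S_{curr})$ equally) and $c_v(S_{curr}) \ge \low{e_v}$, a short calculation gives $\Delta c_d < 0$. Outer iterations on subsequent $q$'s likewise improve: sharing on $p_v$ only grows across absorptions (tightening the new-cost bound), while any intervening drift in the old cost of $q$ is downward.

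\textbf{Phase 2.} Here $q\in N(v)\setminus\T$, so $q\in\Z$ with $c(\sigma_q)\le\low{e_v}/64$, and $q$ switches to $\sigma_q\cup p_{t_q}(S')$. Since $t_q\in N(v)\cap\T$ was absorbed in phase~1, $c_{t_q}(S')$ is bounded above by the phase-1 estimate $c_v(S_{curr})-\Theta(\low{e_v})$. The affected terminals currently routing through $q$ replace their $q$-to-$r$ segment with $\sigma_q\cup p_{t_q}(S')$, paying $c(\sigma_q) + $ share of $p_{t_q}(S') \le \low{e_v}/64 + c_{t_q}(S')$ on the new segment, compared to $c_q(S')$ on the old. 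Since any previously-sharing absorbed terminal that has left $p_q$ only raises $c_q(S')$, we have $c_q(S') \ge c_q(S_{curr}) \ge c_v(S_{curr}) - 2\low{e_v}/7$, and comparing the two bounds shows the switch is strictly improving. Reverse BFS order ensures that any descendants of $q$ in $S'$ have already moved by the time $q$ does, which only weakens sharing on $q$'s old path and strengthens the inequality.

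\textbf{Main obstacle.} The central difficulty is careful bookkeeping when the ``along with descendants'' batch move is serialized into single-player moves: earlier movers shift shares on both the abandoned and new paths, and each subsequent mover must still have a strict improvement despite these cross-effects. The chosen orderings — top-down within $q$'s subtree in phase~1 and reverse BFS across absorbed nonterminals in phase~2 — are designed so that each successive mover sees only more sharing on its new path and less on its old path, keeping the single-player improvement bounds robust to intervening state changes. Care is also needed to transfer the static precondition at $S_{curr}$ to the actual states visited during \func{Absorb}, which reduces to verifying that every intervening move shifts $c_q$ and $c_v$ in the favorable directions described above.
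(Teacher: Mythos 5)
Your high-level plan — serialize the batch moves, then verify each single-player switch is improving via Rosenthal's identity — matches the paper's strategy (Lemmas~\ref{lem:no-sharing-ev}, \ref{lem:ab1}, \ref{lem:ab2}), and your arithmetic on the detour cost and sharing savings is in the right ballpark. However, there are two substantive gaps. First, your estimate ``$q$'s share of $p_v(S_{curr})$ is at most $c_v(S_{curr}) - c(e_v)/2$'' silently assumes $v$ is the \emph{only} player paying for $e_v$ when \func{Absorb} begins: if $k \ge 2$ players already share $e_v$, $q$'s savings on joining is only $c(e_v)/k(k+1) \ll c(e_v)/2$, and your margin evaporates. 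The paper needs a separate lemma (Lemma~\ref{lem:no-sharing-ev}) to establish that all moves between the critical move and the \func{Absorb} call leave $v$ (and possibly $u_v$) as the only user of $e_v$ — this requires checking each branch of the \textbf{while} loop — and you do not prove or even state that claim.

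Second, you flag the cross-effect issue in your ``Main obstacle'' paragraph but the resolution you sketch is not actually correct. You write that ``any intervening drift in the old cost of $q$ is downward.'' This is precisely the problem, not the solution: if earlier absorbers decrease $c_q$ (because they now share parts of $p_q(S)$), then by the time $q$ moves the precondition lower bound $c_q \ge c_v - 2\low{e_v}/7$ may no longer hold, and ``old cost goes down, new cost goes down too'' does not by itself establish that the move is still improving — you need a quantitative cancellation. The paper handles this by isolating the quantity $dec_w$ (the decrease in $q_i$'s cost due to sharing on $p_w(S)\cap p_{q_i}(S)$, where $w$ is $e_v$'s other endpoint) and observing that $dec_w$ appears with the same sign in both the lower bound on the pre-move cost and the upper bound on the post-move cost, so it cancels; the remaining drift is bounded by the neighborhood budget. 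Without this cancellation, the argument does not close. You are also missing the case analysis when $v$ is a nonterminal (the paper treats $q_1$ specially, showing that the first absorbed vertex still gets sharing on $e_v$ either via the existing user $u$ or because $q_1 = u$ itself), which is needed for the savings bound on $e_v$ to hold for every $q_i$.
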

The proof follows from the following three lemmas.

\begin{lemma}
  \label{lem:no-sharing-ev}
  At the beginning of the \func{Absorb}$(v)$ function, $v$ and $u_v$ (if $v$ is a nonterminal) are the only vertices using $e_v$.
\end{lemma}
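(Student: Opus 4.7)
The plan is a straightforward invariant-preservation argument run over the sequence of strategy changes performed inside \func{MainLoop}$(e_v)$, starting from the state produced by the critical move that introduced $e_v$ and ending just before the call to \func{Absorb}$(v)$. The invariant I would maintain is: the set of vertices whose current path to the root contains $e_v$ equals $\{v\}$ if $v$ is a terminal and $\{v, u_v\}$ if $v$ is a nonterminal. The base case is immediate from the definition of a critical move, since by construction the only path containing $e_v$ right after the critical move is that of the rerouted terminal (and, in the nonterminal case, of the freshly inserted $v$ along $u_v$'s path).

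For the inductive step I would go through each distinct kind of strategy change executed inside \func{MainLoop}$(e_v)$ and check that none of them enlarges the user-set of $e_v$. Every rerouting replaces some vertex $z$'s strategy by a new path of the form $q \cup p_{z'}(S_{curr})$, where $q$ is either a short subpath of $\T$ (in \func{Homogenize}) or a small number of explicit edges enumerated by the algorithm --- the pair $(x, u_v), (u_v, y)$ in the $u_v$-adjacency rule, or the single edge $\sigma_w$ in the $\sigma_w$-switch rule. Since $e_v$ was introduced by a critical move, $e_v \notin \T$, so the $\T$-portion of $q$ cannot equal $e_v$. The guarding if-conditions also rule out $q$ picking up $e_v$ from its explicit edges: in the $u_v$-adjacency rule $x, y \in N(v) \setminus \{v\}$ forces $x, y \neq v$ (and they are neighbors of $u_v$, hence distinct from $u_v$), and in the $\sigma_w$-switch the requirement $t_w \neq v, u_v$ together with $w \in N(v) \setminus \T$ (while $u_v$ is a terminal and therefore in $\T$) forces $w \neq u_v$ and rules out $\sigma_w = e_v$. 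The tail $p_{z'}(S_{curr})$ is handled by the same exclusions, which force $z' \notin \{v, u_v\}$ and thus give $e_v \notin p_{z'}(S_{curr})$ by the inductive hypothesis. The call to \func{MakeTree} and the automatic pruning of dangling edges only remove edges from the solution, so they cannot enlarge the user-set either. Finally, the for-loop preceding \func{Absorb} only reroutes $v$ or $u_v$ themselves, and any reroute that succeeds deletes $e_v$ and returns immediately, so \func{Absorb}$(v)$ is never reached in that case.

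The main obstacle is ruling out the various small degenerate configurations in which one of these moves might, in principle, rope a fresh vertex onto $e_v$: for \func{Homogenize}, the ``follow $x_{i+1}$'s path'' step could add $e_v$ if $x_{i+1}$ were $v$ or $u_v$, which is why I would point to the explicit exclusion $u_v, v \notin X$ at the top of the while loop; for the $u_v$-adjacency rule, I would check that neither of the edges $(x, u_v), (u_v, y)$ can equal $e_v$ given $x, y \neq v$; for the $\sigma_w$ rule, I would rely on $u_v \in \T$ and on the $t_w \neq v, u_v$ condition to show both $w \neq u_v$ and $\sigma_w \neq e_v$. Once these checks are in place, the invariant propagates through every step of \func{MainLoop}$(e_v)$ up to the call to \func{Absorb}$(v)$, which is exactly the statement of the lemma.
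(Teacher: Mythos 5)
Your proof follows essentially the same case-by-case move analysis as the paper's: establish that after the critical move only $v$ (and $u_v$) use $e_v$, then verify for each rule in the \textbf{while} loop (\func{Homogenize}, the $u_v$-adjacency rule, the $\sigma_w$-switch, \func{MakeTree}) and for the \textbf{for} loop that the reroute cannot place $e_v$ on any other vertex's path, using exactly the same guard-condition observations the paper invokes. One small caveat you should be explicit about (the paper asserts it without elaboration either): in the $\sigma_w$-switch you claim the guards force $w \ne v$, but the stated guard is $t_w \ne v, u_v$, which does not by itself rule out $w = v$ --- you rule out $w = u_v$ via $w \notin \T$ and $u_v \in \T$, and you should say a word about why $w \ne v$ (e.g.\ because otherwise $t_w = t_v$ and the resulting reroute of $t_v$ onto $p_v$ is exactly the troublesome case one must argue cannot trigger, or cite the paper's asserted restriction $w \ne v, u_v$).
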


\begin{proof}
	First consider the case where $v$ is a nonterminal. We show that no edge added to the solution by the critical move (this is at least $e_v$, and possibly $e_{u_v}$ as well) is used by any terminal other than $u_v$ as a result of lines~\ref{line:main-loop-steiner} to~\ref{line:deleting-end-steiner}. At the start of \func{MainLoop} on line~\ref{line:main-loop-steiner}, $u_v$ is the only terminal using $e_v$ and $e_{u_v}$, by the definition of a critical move. 
	
	Note that \func{Homogenize}$(X)$ only increases sharing on edges in $X$ and the path $p_{x}(S)$ for some $x \in X$. However, since in line~\ref{line:check-homogenous-steiner} we only consider segments not containing $v$, no edge adjacent to $v$ is in $X$ either. And since $v$ is the only vertex using $e_v$ and $v \notin X$, there is no $x$ for which $e_v \in p_x(S)$. Therefore line~\ref{line:check-homogenous-steiner} does not increase sharing on $e_v$. An identical argument shows that this line does not increase sharing on $e_{u_v}$, if $v$ is a nonterminal.
	
	Next consider line~\ref{line:check-adjacent-to-u} of Algorithm~\ref{alg:main-loop}. The only edges on which sharing can increase are $(x,u_v)$, $(u_v,y)$, and edges in $p_y(S)$. By our simplifying assumption, $u_v$ is not adjacent to $v$ in $\T$ and therefore $x,y \not= v$. Therefore neither $(x,u_v)$ nor $(u_v,y)$ is equal to $e_{u_v}$ or $e_v$. Additionally, $e_{u_v}, e_v \notin p_y(S)$. Therefore sharing does not increase on either $e_{u_v}$ or $e_v$.
	
	Line~\ref{line:switch-sigmav-steiner} increases sharing on exactly one of $p_w(S)$ and $p_{t_w}(S)$, as well as $\sigma_w$, but only for $w \not= v, u_v$. We know that $u_v$ is the only terminal using $e_{u_v}$ or $e_v$ in $S$, so neither of these outcomes increases sharing on $e_v$. Line~\ref{line:make-tree-steiner} uses only the \func{MakeTree} function, which, by definition, does not increase the sharing on $e_{u_v}$ or $e_v$. 
	
	Lastly consider the \textbf{for} loop beginning at line~\ref{line:deleting-begin-steiner}. By definition, \func{Absorb}$(v)$ is only run if the \textbf{for} loop does not result in a change to $S_{curr}$, so this can not increase sharing on $e_v/e_{u_v}$. 
	
	If the condition in line~\ref{line:deleting-condition-steiner} is ever true, then the algorithm does not begin the \func{Absorb} function. Therefore we only need to consider what happens when the condition in line~\ref{line:deleting-condition-steiner} is never true, in which case the entire \textbf{for} loop has no effect. 
	
	Now consider the case where $v$ is a terminal. Note that before the start of the \func{MainLoop} function beginning at Line~\ref{line:main-loop-terminal}, $v$ is the only vertex using $e_v$ (either because $v$ only added a single new edge for the critical move, or $v$ added two edges but the execution of \func{MainLoop} on the second edge did not increase sharing on $e_v$ by the argument above).
Using the same argument as for the case of $v$ being a nonterminal, we can prove that this function does not increase sharing on $e_v$, and we omit the details to avoid repetition.
\end{proof}

\begin{lemma}
\label{lem:ab1} 
	Let $\{q_1, \ldots, q_k \}$ denote all vertices in $N(v)$ sorted by breadth first order from $v$ according to $\T$. Then changing $q_i$'s strategy as in Lines~\ref{line:absorb-T} and \ref{line:absorb-Z} is potential decreasing for all $i \in \{ 1 \ldots k \}$.
\end{lemma}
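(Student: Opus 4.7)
The aim is to show that each strategy change executed on Lines~\ref{line:absorb-T} and~\ref{line:absorb-Z} strictly reduces the Rosenthal potential. Since a single-player change in strategy alters $\Phi$ by exactly the change in that player's shared cost, every improving single-vertex switch is potential-decreasing. I will therefore serialize each batch ``$q_i$ along with its descendants'' into a sequence of single-vertex switches (root first, then each descendant in BFS order from $v$), and argue each sub-switch is strictly improving for the switching vertex.

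For each vertex $q$ being absorbed, I will upper bound its new cost as the sum of three pieces. (i) The $T^*$-portion $p_{T^*}(q,v)$ (or $p_{T^*}(q,t_v)$ when $v\notin\T$) is entirely contained in $N(v)\cap \T$, and so its total edge cost is at most $\low{e_v}/56$ by the interval budget used to define $N(v)$. (ii) When $v \notin \T$, the edge $\sigma_v$ contributes at most $\low{e_v}/64$, using $v \in \Z$. (iii) On the piece $p_v(S)$, vertex $q$ now shares with $v$ and any previously absorbed terminal, so its contribution is at most $c_v(S)$ minus the savings on $e_v$ from the new co-user; by Lemma~\ref{lem:no-sharing-ev}, $e_v$ starts Absorb with only $v$ (and possibly $u_v$) on it, so that saving is at least a constant fraction of $\low{e_v}$.

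For the lower bound on $q$'s current cost, I will appeal to the precondition $c_q(S_{curr}) \ge c_v(S_{curr}) - \frac{2\low{e_v}}{7}$ (Lemma~\ref{lem:cost-after-deleting}) together with a short bookkeeping argument that $c_v$ is non-increasing throughout \func{Absorb}, since sharing on $p_v$ only grows as vertices join it. Combining the upper bound on the new cost with the lower bound on the current cost, the constants $1/56$, $1/64$, and $2/7$ are tuned precisely so that the strict inequality ``new cost $<$ current cost'' holds, giving a strictly improving move and hence a strict decrease in $\Phi$.

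The main obstacle is the descendant portion of each batch. Once $q_i$ has switched, I need to verify that every descendant $q'$ still has an improving switch when its turn comes. Two favorable changes help: (a) the sharing on the common suffix $p_v$ has only increased, so $q'$'s contribution there is no larger; (b) the $T^*$-portion for $q'$ still sits inside $N(v)\cap \T$, hence is still bounded by $\low{e_v}/56$. Controlling how $q'$'s current cost evolves under the preceding switches requires tracking which edges have lost or gained users among $q'$'s current path, a delicate but routine accounting that mirrors the one used in the broadcast analysis of~\cite{BiloFM13}. Verifying that the tuned constants continue to close the inequality after this accounting is the technically heaviest part of the argument.
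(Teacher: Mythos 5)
Your high-level strategy (upper bound the new cost, lower bound the pre-switch cost, and close the gap with tuned constants) matches the paper's proof, and the three-piece decomposition of the new cost — the $\T$-portion inside $N(v)$, the $\sigma_v$ edge when $v \notin \T$, and the shared suffix $p_v(S)$ — is exactly right. But the mechanism you propose for the lower bound does not work, and it hides the one genuinely non-trivial idea in the argument. The precondition $c_{q_i}(S) \ge c_v(S) - \tfrac{2\low{e_v}}{7}$ is certified only at the state $S$ in which \func{Absorb} begins; it says nothing about $c_{q_i}(S^{i-1})$, and $c_{q_i}$ \emph{can} drop between $S$ and $S^{i-1}$ as earlier absorbers $q_1,\ldots,q_{i-1}$ add sharing to edges of $p_{q_i}(S)$. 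Observing that $c_v$ is non-increasing is true but irrelevant: it only shrinks the right-hand side of the inequality you want, while you have no handle on the left-hand side. What the paper actually does is bound the drop in $c_{q_i}$ directly by partitioning $p_{q_i}(S)$: on $N(v)\cap\T\cap p_{q_i}(S)$ the drop is at most $\tfrac{\low{e_v}}{28}+\tfrac{\low{e_v}}{128}$ by the interval budget, and on $p_w(S)\cap p_{q_i}(S)$ (where $w$ is the far endpoint of $e_v$) the drop is an \emph{unknown} quantity $dec_w$. Crucially, that same $dec_w$ reappears with the same sign in the upper bound on $c_{q_i}(S^i)$, because $q_i$'s new path contains $p_w(S)$, and the two occurrences cancel. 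Calling this step ``routine accounting'' obscures the cancellation that makes the constants close; without it you are left with an uncontrolled term.

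A second gap is the nonterminal case. Your saving on $e_v$ assumes $v$ is a co-user of $e_v$, which fails when $v$ is a nonterminal (then $v$ pays nothing). For $q_1$ the co-user is $u_v$; for $i \ge 2$ it is $q_1$ itself, which requires knowing that $q_1$ has already switched onto $e_v$. There is also a degenerate case $q_1 = q_i$ with $v \notin \T$, where the paper uses quasi-bipartiteness to show $q_1$ is already on the prescribed strategy and so no move happens; your serialization into ``strictly improving single-vertex switches'' would try to prove a strict improvement that does not exist there, and you need to recognize that this step is a no-op rather than an improving move.
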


\begin{proof}
Suppose that $e_v = \{ v,w \}$. Let $S$ be the solution before \func{Absorb}$(v)$ is called and let $S^{i}$ be the solution after $q_i$ has changed strategy. Therefore $c_{q_i}(S^{i-1})$ is the cost paid by $q_i$ directly before switching, and $c_{q_i}(S^i)$ is the cost paid by $q_i$ directly after switching. $c_{q_i}(S^{i-1})$ is exactly equal to the cost paid by $q_i$ before any players switched, minus the reduction in $q_i$'s cost due to sharing on $p_{q_i}(S)$ from $q_1, \ldots, q_{i-1}$ changing their strategies. Since we know that $e_v \notin p_{q_i}(S)$, we can divide this reduction into two components: the reduction due to sharing on edges in $N(v) \cap \T \cap p_{q_i}(S)$, and the reduction due to sharing on edges in $p_w(S)\cap p_{q_i}(S)$. Denote the latter quantity by $dec_w$. The former quantity is upper bounded by the maximum cost $q_i$ could pay on $N(v) \cap \T$ in $S$ (remembering that $u_v$ may not be using any edges in $N(v)$ and thus not contributing to sharing), $2\frac{\low{e_v}}{56} + 2\frac{\low{e_v}}{256}$. So we can upper bound the total decrease in $q_i$'s cost due to the players' switching by $\frac{\low{e_v}}{28} + dec_w + \frac{\low{e_v}}{128}$. Therefore
	\begin{equation*}
		c_{q_i}(S^{i-1}) \ge c_v(S)-\frac{2 \cdot \low{e_v}}{7}-\frac{\low{e_v}}{28} -dec_w -\frac{\low{e_v}}{128}.
	\end{equation*}

  Suppose that $v$ is a terminal and consider the cost paid by $q_i$ directly after switching, $c_{q_i}(S^i)$. $q_i$ is sharing edge $e_v$ with at least one other player (namely $v$), so on the edges shared with $v$, $q_i$ is paying at most $c_v(S) -\frac{\low{e_v}}{2} - dec_w$. And on the edges not shared with $v$, namely the edges in $p_{\T}(q_i,v)$, $q_i$ pays at most $ 2\sum_{\alpha \ge 0} 256^{\alpha + 1} H_{n_{N^+(v), \alpha}} + 2 \frac{\low{e_v}}{256} \le \frac{\low{e_v}}{56} + \frac{\low{e_v}}{128}$. So 
	\begin{equation}
		c_{q_i}(S^i) \le c_v(S) - \frac{\low{e_v}}{2} - dec_w + \frac{\low{e_v}}{56} + \frac{\low{e_v}}{128} < c_v(S) - dec_w - \frac{6 \cdot \low{e_v}}{14}  < c_{q_i}(S^{i-1}). \label{eqn:absorbing}
\end{equation}
	Therefore, it is an improving move for $q_i$ to switch. 

  Now suppose that $v$ is a nonterminal. Then by definition, there is some terminal $u$ using the edges $\{u,v\}$ and $e_v$. Suppose that $q_i \not= q_1$. Then when $q_i$ switches in the absorbing process, it shares the cost of $e_v$ with $q_1$. For all $i \ge 2$, $q_i$ shares the cost of edge $e_v$ with (at least) $q_1$. Therefore, Equation \ref{eqn:absorbing} holds for all $i$. 
  
  The last case is when $v$ is a nonterminal and $q_1 = q_i$. But if this is the case then edge $\{ u,v \} \in \T$. If this were not true, then there is some path $p_\T(u,v) = \{ u, x_1, x_2, \ldots, x_n, x_{n+1} = v \}$. Since the network is quasi-bipartite, $x_n$ must be a terminal, and $x_n$ comes before $q_i$ in a breadth first traversal of $\T$ rooted at $v$, contradicting that $q_i=q_1$. Therefore $q_i$ is already using strategy $\{ u,v \} \cup p_v(S)$, so there is no switch to be done by $q_1$ in step 1 of the \func{Absorb} function. For $i \ge 2$, $q_i$ pays at most half the cost of edge $e_v$ (since it is shared with $u$), so Equation \ref{eqn:absorbing} holds.
\end{proof}

\begin{lemma}
\label{lem:ab2}
	Let $\{s_1, \ldots, s_k \}$ denote all nonterminals in $N(v) \setminus \T$ with $class(e_{s_i}) \le class(e_v)$, sorted in breadth-first order from $r$ according to $S$. Then for all $i \in \{ 1 \ldots k \}$, it is an improving move for $s_i$ to switch as in Line~\ref{line:absorb-steiner}. Moreover, after \func{Absorb}$(v)$ is completed, $e_{s_i}$ is no longer in the solution and is replaced by $\sigma_{s_i}$ (that is, $\sigma_{s_i}$ is the first edge on $s_i$'s path to the root, $p_{s_i}$).
\end{lemma}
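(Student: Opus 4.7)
The plan is to mirror the accounting in Lemma~\ref{lem:ab1}: derive a lower bound for $c_{s_i}$ just before it switches and an upper bound just after, then check that the gap is positive. Let $S^j$ denote the state after the first $j$ nonterminals of the second \textbf{for} loop of \func{Absorb} have switched (so $S^0 = S'$), and let $S = S_{curr}$ be the state at the start of \func{Absorb}$(v)$.

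For the upper bound on $c_{s_i}(S^i)$: after the switch, $s_i$ uses $\sigma_{s_i} \cup p_{t_{s_i}}(S')$. The edge $\sigma_{s_i}$ costs at most $\low{e_v}/64$ by the definition of $N(v)$, and $s_i$'s share on $p_{t_{s_i}}(S')$ is at most $c_{t_{s_i}}(S')$ since it now shares that path with (at least) $t_{s_i}$. I would then apply Lemma~\ref{lem:ab1} to $t_{s_i} \in N(v)\cap\T$---valid because $s_i$'s presence in the interval portion of $N(v)$ forces $t_{s_i} \ne u_v$---to obtain $c_{t_{s_i}}(S') \le c_v(S) - \low{e_v}/2 + \low{e_v}/56 + \low{e_v}/128 - dec$ for some nonnegative $dec$ matching the sharing-bonus term in Lemma~\ref{lem:ab1}.

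For the lower bound on $c_{s_i}(S^{i-1})$: I would start from the precondition $c_{s_i}(S) \ge c_v(S) - 2\low{e_v}/7$ of Lemma~\ref{lem:cost-after-deleting} and argue that $c_{s_i}$ has dropped by at most $dec$ during \func{Absorb}. The key observation is that the reverse-BFS ordering in the second phase forces every $s_j$ switched before $s_i$ to be a descendant of $s_i$ in $S'$; such switches only remove users from $p_{s_i}(S)$ and therefore cannot decrease the cost $s_i$ pays on any edge of $p_{s_i}(S)$ outside $p_v(S)$. The only possible reduction comes from sharing gained on the overlap of $p_{s_i}(S)$ with $p_v(S)$ during the first phase, which is precisely the $dec$ bookkept above. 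Subtracting then gives $c_{s_i}(S^{i-1}) - c_{s_i}(S^i) \ge \low{e_v}\bigl(\tfrac{1}{2} - \tfrac{2}{7} - \tfrac{1}{64} - \tfrac{1}{56} - \tfrac{1}{128}\bigr) > 0$, establishing that the switch is strictly improving.

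For the ``moreover'' claim: since $e_{s_i}$ is $s_i$'s first edge, every terminal using $e_{s_i}$ must traverse $s_i$ on its way to the root; the pseudocode switches $s_i$'s descendants together with $s_i$, so after the switch no terminal uses $s_i$, hence $e_{s_i}$ drops out of the solution and $s_i$'s first edge becomes $\sigma_{s_i}$. The main obstacle will be making the $dec$-cancellation rigorous: one has to carefully track which previously-switched vertices contribute sharing to which edges of $p_{s_i}(S) \cap p_v(S)$ versus $p_{t_{s_i}}(S')$, and the reverse-BFS ordering (combined with the class restriction $class(e_{s_i}) \le class(e_v)$, which controls how large $e_{s_i}$ can be relative to the other quantities) is exactly what makes the cancellation go through.
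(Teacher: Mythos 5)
Your overall strategy matches the paper's: bound $c_{s_i}$ just before and just after the switch to $\sigma_{s_i}\cup p_{t_{s_i}}(S')$, and show the gap is positive using the \func{Absorb} precondition plus Lemma~\ref{lem:ab1}-style accounting. However, your lower bound is not justified. You claim the drop in $c_{s_i}$ between the start of \func{Absorb} and the moment $s_i$ switches is only $dec$ (sharing gained on $p_{s_i}(S)\cap p_v(S)$), arguing via the reverse-BFS order that second-phase switches only \emph{remove} users from $p_{s_i}(S)$. This misses the first phase entirely: when terminals $q\in N(v)\cap\T$ reroute to $p_\T(q,v)\cup p_v(S)$ (or $p_\T(q,t_v)\cup\sigma_v\cup p_v(S)$), their new paths traverse $\T$-edges inside $N(v)$, and some of those edges may lie on $p_{s_i}(S)$ even though they are not in $p_v(S)$. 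That increased sharing further depresses $c_{s_i}$ before $s_i$ switches. The paper explicitly bounds this additional drop by $\low{e_v}/28 + \low{e_v}/128$ (the cost $s_i$ could possibly be paying on $N(v)\cap\T\cap p_{s_i}(S)$), yielding the lower bound $c_{s_i}(S^{i+1}) \ge c_v(S) - \tfrac{2\low{e_v}}{7} - \tfrac{\low{e_v}}{28} - \tfrac{\low{e_v}}{128} - dec_w$. Your reverse-BFS observation is not even needed for the paper's bound; it addresses the wrong phase. Once you add back the missing $\low{e_v}/28 + \low{e_v}/128$ term, the arithmetic still closes (the margin just shrinks), so the theorem statement is unaffected, but as written your lower-bound derivation is incorrect. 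One more minor omission: the paper separately disposes of the degenerate case $e_{s_i}=\sigma_{s_i}$ (in which there is no switch to make); your sketch silently assumes $e_{s_i}\ne\sigma_{s_i}$.
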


\begin{proof}
  Let $S$ be the solution before \func{Absorb}$(v)$ is called. Let $s_i \in N(v) \setminus \T$. Consider first the case where $e_{s_i}=\sigma_{s_i}$. In this case $s_i$ is already taking strategy $e_{s_i} \cup p_{t_{s_i}}(S')$, since descendants are absorbed in lines~\ref{line:absorb-T} and~\ref{line:absorb-Z}, and there is no change to make in line~\ref{line:absorb-steiner}. Suppose for the rest of the proof that $e_{s_i}\not= \sigma_{s_i}$. 
  
  Let $S^i$ be the solution after $s_i$ has switched strategy. In particular, $c_{s_i}(S^{i+1})$ is the cost paid by $s_i$ directly before switching, and $c_{s_i}(S^i)$ is the cost paid by $s_i$ directly after switching.  $c_{s_i}(S^{i+1})$ is the cost paid by $s_i$ before the absorbing process began, $c_{s_i}(S)$, minus the cost reduction due to sharing on $p_{s_i}(S)$ due to other vertices switching as part of the absorbing process. Since $\sigma_{s_i} \notin p_{s_i}(S)$, there is no contribution to the latter term due to sharing on $\sigma_{s_i}$. The only other edges in $p_{s_i}(S)$ that can have increased sharing as a result of previous moves in the absorbing process are edges in $N(v) \cap \T \cap p_{s_i}(S)$ and edges in $p_w(S) \cap p_{s_i}(S)$.
  Therefore we have the same bound on this term as in Lemma \ref{lem:ab1}, $\frac{\low{e_v}}{28} + \frac{\low{e_v}}{128} + dec_w$. We also know, from the precondition for \func{Absorb}, that 
	\begin{equation*}
		c_{s_i}(S) \ge c_v(S) - \frac{2 \cdot \low{e_v}}{7}.
	\end{equation*}
	
Therefore the cost that $u$ pays immediately before switching, $c_{s_i}(S^{i+1})$, satisfies
\begin{equation*}
	c_{s_i}(S^{i+1}) \ge c_v(S) - \frac{2 \cdot \low{e_v}}{7} - \frac{\low{e_v}}{28} - \frac{\low{e_v}}{128} - dec_w = c_v(S) - dec_w - \frac{10 \cdot \low{e_v}}{28}.
\end{equation*}
	
Now consider $s_i$'s cost immediately after switching (before any descendants switch), $c_{s_i}(S^i)$. $s_i$ pays at most the entire cost of $\sigma_{s_i}$. As in the proof of Lemma \ref{lem:ab1}, $s_i$ pays at most $c_v(S) -\frac{\low{e_v}}{2} - dec_w$ on edges shared with $v$, and at most $\frac{\low{e_v}}{56} + \frac{\low{e_v}}{128}$ on edges between $t_{s_i}$ and $v$. 
So $c_{s_i}(S^i)$ satisfies
\begin{align*}
	c_{s_i}(S^i) &\le c(\sigma_{s_i}) + c_v(S) - \frac{\low{e_v}}{2} - dec_w + \frac{\low{e_v}}{28} + \frac{\low{e_v}}{128}\\
	&\le \frac{\low{e_v}}{64} + c_v(S) - \frac{\low{e_v}}{2} - dec_w + \frac{\low{e_v}}{14}\\
	&< c_v(S) - dec_w - \frac{11 \cdot \low{e_v}}{28} < c_{s_i}(S^{i+1}),
\end{align*}
which proves the first part of the lemma. The second part follows from the definition of the move.
\end{proof}

Theorems~\ref{thm:prefix-switch-reduces-potential} and~\ref{thm:absorb-reduces-potential} prove that the entire main loop is potential reducing. Since the minimum decrease in potential is bounded away from zero, and the potential is always at least zero, the algorithm necessarily terminates.

However, termination alone does not guarantee that the final state is a Nash equilibrium.
Since we have restricted the set of moves that the algorithm can perform, we must show that whenever an improving move is available to some terminal, there is also an improving move that is either a safe or critical move.
\begin{lemma} \label{lem:final-state-is-ne}
	The final state reached by the algorithm, $S_f$, is a Nash equilibrium.
\end{lemma}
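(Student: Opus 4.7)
The plan is to prove the contrapositive: whenever no safe or critical improving move is available at $S_f$, no improving move is available at all. Suppose for contradiction that $S_f$ is not a Nash equilibrium, so some terminal $a$ has a strictly improving deviation; let $P^*$ be $a$'s best response. I will extract from $P^*$ an improving deviation that uses at most two edges outside $\T \cup S_f$, which is then safe (zero new edges) or critical (one or two new edges), contradicting the termination condition of the algorithm.

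\textbf{Structural step.} Walk along $P^*$ from $a$ and let $v$ be the first vertex on $P^*$ lying in the edge-support of $S_f$. Every terminal is connected to $r$ in $S_f$, so every intermediate vertex of $P^*$ strictly between $a$ and $v$ must be a nonterminal absent from $S_f$. Quasi-bipartiteness forbids two such nonterminals from being adjacent, and whenever the immediate predecessor of $v$ on $P^*$ is a nonterminal, $v$ itself must be a terminal. Hence the prefix $Q$ of $P^*$ from $a$ to $v$ has length one or two, with its possible intermediate vertex being a nonterminal $b \notin S_f$.

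\textbf{Candidate deviation and bookkeeping.} Consider $P' := Q \cup p_v(S_f)$. If $Q \subseteq \T \cup S_f$ then $P'$ is a candidate safe move; otherwise $P'$ introduces one or two new edges and is a candidate critical move. If additionally $c_a(P') < c_a(S_f)$, then $P'$ is the required safe or critical improving move for $a$ and we are done. Otherwise, letting $R$ denote the tail of $P^*$ from $v$ to $r$ and observing that $a$ is the sole user of the new edges of $Q$ in both candidate states (so the $Q$-contribution to $a$'s cost is identical in both), we obtain
\[
  c_a^{p_v(S_f)} - c_a^{R} \;=\; c_a(P') - c_a(P^*) \;\ge\; c_a(S_f) - c_a(P^*) \;>\; 0.
\]

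\textbf{Recursing on $v$, and the main obstacle.} The inequality above says that $a$'s share of $R$ is strictly less than $a$'s share of $p_v(S_f)$. Since $v$ is at least as good as $a$ at sharing the edges of $R$ (it is not a current user of the new edges of $R$, and it sits on the edges of $p_v(S_f)$ whose load it can now relinquish), careful bookkeeping converts this into a strict improvement for $v$ obtained by replacing $p_v(S_f)$ with a suitable variant of $R$. Re-applying the structural step with $v$ in place of $a$ and $v$'s improving deviation in place of $P^*$, the length of the deviation along $P^*$ strictly decreases, so this recursion terminates at a terminal whose extracted deviation is both improving and uses at most two new edges. The delicate point — and the main obstacle I foresee — is precisely this conversion, since edges common to $p_v(S_f)$ and $R$ must be charged without double-counting, and the subcases driven by whether $v$ is a terminal, whether $b \in \Z$, and which edges of $Q$ lie in $\T$ must be checked individually. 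All of these reduce to direct comparisons of the shared-cost formula edge-by-edge, but assembling them into a clean monovariant for the recursion is where the bulk of the technical work will lie.
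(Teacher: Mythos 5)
Your structural step and the use of quasi-bipartiteness to cap the prefix $Q$ at two edges are sound, and the quantity $c_a^{p_v(S_f)} - c_a^{R}$ is the right thing to track. But the recursion has a genuine gap: you walk forward to the \emph{first} vertex $v$ of $P^*$ lying in $S_f$, and nothing in the argument guarantees that the vertex at which the recursion eventually produces an improving short deviation is a \emph{terminal}. Your chain of inequalities gives, for each successive $v$ on $P^*$ that lies in $S_f$, that ``$v$ would do better on the tail $R$ than on $p_v(S_f)$'' — but if the last such $v$ where the short candidate deviation is improving is a nonterminal, this is only an abstract statement about shared costs; it is not a move any player can execute, and the contradiction with termination of the algorithm does not follow. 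Your final sentence asserts that the recursion ``terminates at a terminal,'' but this is exactly what needs to be proved and the sketch does not supply it.

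The paper avoids both this issue and most of the bookkeeping you flag with a different choice of pivot: among all \emph{terminals} $q_i$ on $P^* = (q_1,\ldots,q_k=r)$, it picks the one of \emph{highest} index for which $p_{q_i}(S_f) \ne (q_i,\ldots,q_k)$. Maximality guarantees that every later terminal on $P^*$ already follows the corresponding suffix of $P^*$ in $S_f$, so $q_i$'s deviation to the suffix is a single new edge $(q_i,q_{i+1})$ followed by $p_{q_{i+1}}(S_f)$ (if $q_{i+1}$ is a terminal), or two new edges $(q_i,q_{i+1}),(q_{i+1},q_{i+2})$ followed by $p_{q_{i+2}}(S_f)$ (if $q_{i+1}$ is a nonterminal, whence $q_{i+2}$ is a terminal by quasi-bipartiteness) — precisely a safe or critical move by a terminal. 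The cost comparison is then a single step, comparing $P^*$ with $(q_1,\ldots,q_i)\cup p_{q_i}(S_f)$, rather than an open-ended recursion. Your argument can likely be repaired by replacing ``first vertex in $S_f$'' with ``highest-index terminal whose current path differs from the suffix of $P^*$,'' but as written it does not establish that the entity holding the safe or critical improving move is a player.
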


\begin{proof}
	Suppose for contradiction that $S_f$ is not a Nash equilibrium, that there is an improving deviation for some player $q$. Consider the most improving deviation (that with lowest cost) and denote this lowest cost path to the root as $p_q(S'_f) = \{ q = q_1, q_2, \ldots, q_k = r \}$. Of all vertices $q_i$, consider the terminal with highest index such that $p_{q_i}(S_f) \not= \{q_i, q_{i+1}, \ldots , q_k=r \}$. Since $p_q(S'_f)$ is of lower cost to $q$ than the path $\{q=q_1, q_2, \ldots, q_{i-1}, q_i \} \cup p_{q_i}(S_f)$, it must also be the case that $\{q_i, q_{i+1}, \ldots , q_k=r \}$ is of lower cost to $q_i$ than $p_{q_i}(S_f)$. And by the maximality of index $i$, $q_i$ has an improving move where she can add edge $\{ q_i, q_{i+1} \}$ (if $q_{i+1}$ is a terminal), or edges $\{ q_i, q_{i+1} \}, \{ q_{i+1}, q_{i+2} \}$ (if $q_{i+1}$ is not a terminal). This is necessarily either a safe or critical move.
	
Therefore, if an improving move exists for any player at state $S_f$, then a safe or critical move exists for some player, contradicting termination of the algorithm.
\end{proof}

\subsection{Cost Analysis}\label{sec:cost-analysis}

Our goal for this section is to show our main result, Theorem~\ref{thm:main}.
We will show that $c(S_f) = O(c(T^*))$.
That is, we will show that that the cost of the final Nash equilibrium reached by the algorithm, $S_f$, is within a constant factor of the cost of the optimal tree, $\T$.

To establish the theorem, it is sufficient to show that $c(S_f \setminus T^*) = O(c(T^*))$. We devise a charging scheme that distributes the cost of edges in $S_f \setminus T^*$ among edges in $T^*$. Each $e \in S_f \setminus T^*$ must be an $e_v$ edge for some vertex $v$.  Furthermore, these $e_v$ edges were not later removed as the result of an absorbing process initiated from another $e_{v'}$. At a high level, this allows us to distribute the cost of each $e_v$ to the edges in the neighborhood $N(v) \cap \T$, since the \func{Absorb}$(v)$ function removes many other $e_{v'}$ edges where $v' \in N(v)$ from the solution. When $v$ is a terminal, this is the same argument used in \cite{BiloFM13}; however, we will need to take special care when distributing cost for $e_v$ when $v$ is a nonterminal as well as for some $\sigma_v$ edges when $v \notin \T$.
 
We first consider a set of edges that we will not charge to their neighborhood. Define
\begin{equation*}
E_\sigma = \{ e_v \in S_f | v \text{ is a nonterminal, } \frac{\low{e_v}}{64} \le c(\sigma_v) \}.
\end{equation*}
We bound the cost of $E_\sigma$ by the cost of edges in $S_f \setminus E_\sigma$.
 
 \begin{lemma} 
 \label{lem:shortedge} 
 $c(E_\sigma) = O(c(S_f \setminus E_\sigma))$. 
 \end{lemma}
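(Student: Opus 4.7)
The plan is to set up a direct, injective charging from $E_\sigma$ into $S_f\setminus E_\sigma$ that loses only a constant factor in cost. The key observation is quasi-bipartiteness: every edge incident to the nonterminal $v$ has a terminal on its other endpoint and has cost at least $c(\sigma_v)$, and membership in $E_\sigma$ pins $c(\sigma_v)$ within a constant factor of $c(e_v)$.

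First I would record the basic cost comparison. Fix $e_v\in E_\sigma$, so $v$ is a nonterminal and $c(\sigma_v)\ge \low{e_v}/64$. Since $c(e_v)<\upp{e_v}=256\low{e_v}$, this gives $c(e_v)<16384\,c(\sigma_v)$, and by definition of $\sigma_v$ every edge of $G$ incident to $v$ has cost at least $c(\sigma_v)$.

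Next I would identify a canonical edge of $S_f\setminus E_\sigma$ incident to $v$. Because $v$ is a nonterminal but lies in $S_f$, it appears as an internal vertex on some terminal's path to $r$ and so has at least two incident edges in $S_f$, one being $e_v$. Let $\{u,v\}$ be any other such incident edge. Quasi-bipartiteness forces $u$ to be a terminal, and the previous paragraph gives $c(\{u,v\})\ge c(\sigma_v)\ge c(e_v)/16384$. Moreover $\{u,v\}\notin E_\sigma$: any edge of $E_\sigma$ has its associated nonterminal as an endpoint, so if $\{u,v\}\in E_\sigma$ the associated nonterminal must be $v$, giving $\{u,v\}=e_v$, contradicting our choice.

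Finally I would verify injectivity of the charging $e_v\mapsto\{u,v\}$ and sum. If distinct $e_{v_1},e_{v_2}\in E_\sigma$ were charged to a common edge, that edge would be incident to both nonterminals $v_1$ and $v_2$, but an edge with two nonterminal endpoints is forbidden in a quasi-bipartite graph. Summing $c(e_v)\le 16384\,c(\{u,v\})$ over $e_v\in E_\sigma$ therefore yields $c(E_\sigma)\le 16384\,c(S_f\setminus E_\sigma)=O(c(S_f\setminus E_\sigma))$. I do not expect any real obstacle: the argument is purely combinatorial and never invokes the homogenization--absorption framework, consistent with the stated motivation that these ``short-$\sigma$'' edges are separated out precisely so they can be accounted for outside that framework.
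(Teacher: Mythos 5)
Your proof is correct and uses essentially the same charging as the paper: map each $e_v \in E_\sigma$ to another edge of $S_f$ incident to the nonterminal $v$, which by quasi-bipartiteness has a terminal endpoint and, by minimality of $\sigma_v$, costs at least $c(e_v)/O(1)$. The only cosmetic difference is the injectivity argument --- the paper invokes the unique-parent structure of $S_f$ (charging $e_v$ to a child edge of $v$), while you appeal directly to quasi-bipartiteness to rule out two distinct nonterminals being charged to a common edge --- and your constant $256 \cdot 64$ is the careful version of the paper's stated $64$.
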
  
 
 \begin{proof} 
Let $e_v = (v,w) \in E_\sigma$, where $v$ is a nonterminal. First observe that since $(v,w) \in S_f$, $v$ is not a leaf. Let $e_u = (u,v)$, where $u$ is necessarily a terminal. Since $c(e_v) \leq 64 \cdot c(\sigma_v)$, we have that $c(e_v) \leq 64 \cdot c(e_u)$ from the definition of $\sigma_v$. Thus, we charge $c(e_v)$ to $c(e_u)$. Observe that no edge in $S_f \setminus E_\sigma$ is charged more than once since every nonterminal has a unique parent in $S_f$, and edges in $S_f \setminus E_\sigma$ are only charged the cost of the first edge used by their parent (if at all). 
\end{proof} 

Our goal now is to find a set of edges $e_v$ such that the right neighborhoods associated with edges of the same class are not overlapping.
In the absence of nonterminals, this is simple:
For every edge in $S_f \setminus \T$, the right neighborhoods of vertices corresponding to edges of the same class being overlapping implies that each edge is contained in the other's neighborhood.
Therefore, we argue that the second edge to arrive would have deleted the first through the \func{Absorb} function, which gives a contradiction.
With nonterminals, the same property does not hold.
When edge $e_v$ is added for some nonterminal $v$, $e_{u_v}$ will not be deleted from the solution, even if $u_v$ falls in $v$'s neighborhood.
The presence of $\sigma_v$ for which no \func{MainLoop}$(\sigma_v)$ was run (added, e.g., in line~\ref{line:absorb-T}) further complicates things.
To show that no right neighborhoods overlap, we will therefore remove some edges from $S_f \setminus (\T \cup E_{\sigma})$.

For nonterminal $v$, if $v$ is adjacent to at least two edges in $S_f \setminus (\T \cup E_{\sigma})$ and $\sigma_v$ is one such edge, remove $\sigma_v$ and charge it to one of the remaining edges adjacent to $v$.
Next, for any pair of edges $e_u$ and $e_v$ in $S_f \setminus (\T \cup E_{\sigma})$ such that $u$ was the terminal which added $e_v$, we delete the smaller of $e_u$ and $e_v$ and charge it to the remaining edge.
We are left with a set of edges which we denote $E^*$, each of which has been charged by at most two edges that were removed (and each edge removed is charged to some edge in $E^*$).

Our argument will charge to each edge in $\T$ at most one edge in $E^*$ of each class.
To make the argument simpler, it is desirable to charge those $\sigma_v$'s for which \func{MainLoop}$(\sigma_v)$ was never run to higher classes than their actual classes.
To this end, we increase the cost of each such $\sigma_v$ to $c(e_{\sigma_v})$, the cost of the first edge on $v$'s path in the state just before $\sigma_v$ was added.

\begin{lemma}
  \label{lem:overlap}
  For edges $e_u,e_v \in E^*$, if $class(e_v) = class(e_u)$, then $N^+(v)$ and $N^+(u)$ are disjoint.
\end{lemma}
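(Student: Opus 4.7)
The plan is to prove the lemma by contradiction. Suppose $e_u, e_v \in E^*$ satisfy $class(e_u) = class(e_v)$ and $N^+(u) \cap N^+(v) \ne \emptyset$.

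I first establish a membership claim: the overlap of the two right-neighborhoods on $MC$, together with the equality of their budgets $\low{e_u}/56 = \low{e_v}/56$, forces (after suitably orienting $MC$ and, if needed, swapping the roles of $u$ and $v$) the ``later-added'' vertex to contain the ``earlier-added'' one inside its neighborhood $N(\cdot)$. The interval portion of $N$ is state-independent, so the $MC$-membership of $u$ (or $t_u$ for nonterminal $u$) inside $I_{v, \low{e_v}/56}$ is immediate from the overlap. For nonterminal $u$, the $\Z$-condition needed for inclusion in $N(v)$ reduces to $c(\sigma_u) < \low{e_u}/64$, which holds by $e_u \notin E_\sigma$; moreover, by the definition of $\tau_u$ below (the last time $e_u$ is added), $u$'s first edge at the relevant moment is precisely $e_u$, so the correct value $\low{e_u}$ is used in the $\Z$-test.

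Next, let $\tau_u, \tau_v$ be the last times $e_u, e_v$ are (re)added to the solution; both are well-defined since $e_u, e_v \in S_f$. WLOG $\tau_v > \tau_u$, so the previous step ensures $u \in N(v)$ at time $\tau_v$. At $\tau_v$, the introduction of $e_v$ triggers a \func{MainLoop}$(e_v)$ call (or, in the special case $e_v = \sigma_v$ with $v$ a nonterminal, an \func{Absorb}$(v)$ call within some other \func{MainLoop}). In the \func{MainLoop} case, after \func{Homogenize} the for-loop at Line~\ref{line:deleting-begin-steiner}/\ref{line:deleting-begin-terminal} either deletes $e_v$---contradicting $e_v \in S_f$---or terminates without change, leading to \func{Absorb}$(v)$. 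By construction of $E^*$ (which removes the smaller of $e_u, e_v$ whenever $u = u_v$), we have $u \ne u_v$, so \func{Absorb}$(v)$ replaces $u$'s strategy: $u$'s new first edge is either in $T^*$ (Lines~\ref{line:absorb-T}, \ref{line:absorb-Z}) or equals $\sigma_v$ (Line~\ref{line:absorb-steiner}). Since $e_u \notin T^*$ and $e_u \ne \sigma_v$, and since the Absorb also reroutes all descendants of $u$ so that no other terminal still uses $e_u$, the edge $e_u$ leaves the solution. But $e_u \in S_f$ then requires a reintroduction after $\tau_v > \tau_u$, contradicting the maximality of $\tau_u$.

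The main obstacle is the first step: proving that overlap of the right-neighborhoods implies (after a careful choice of labeling) that the earlier-added vertex lies in the later-added vertex's full neighborhood $N(\cdot)$. The intervals $N^+$ and $N^-$ have independent budget accounts, and by Fact~\ref{fct:MC-segment-contains-T-path} every $MC$-segment between two vertices contains their $T^*$-path, but the two directions on the cycle differ in their excursions and thus in their budget consumption. Handling this asymmetry, together with the state-dependence of $\Z_{S_{curr}}$ in the nonterminal case, is the main technical burden, but is made tractable by the carefully curated construction of $E^*$ (in particular the removal of $\sigma_v$ adjacent to multiple $E^*$-edges, the pair-removal when one vertex added the other as a critical edge, and the cost inflation of those $\sigma_v$ for which no \func{MainLoop}$(\sigma_v)$ was run).
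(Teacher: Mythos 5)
Your proof captures the overall structure of the paper's argument (an overlap forces the earlier-added vertex into the later-added vertex's neighborhood, after which an \func{Absorb} call removes the earlier edge), and it correctly handles the case where \func{MainLoop}$(e_v)$ was explicitly run when $e_v$ entered the solution. However, there are two substantive gaps.

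First, the membership claim is asserted but not actually proved. You acknowledge this yourself as ``the main technical burden'' and wave at it being ``made tractable by the carefully curated construction of $E^*$,'' but you never give the argument. The paper's argument here is short and does not rely on the $E^*$ pruning at all: if $e_u$ preceded $e_v$ and $u \notin N(v)$ despite $N^+(u)$ and $N^+(v)$ overlapping, then $u$ (or $t_u$) must lie to the left of $N(v)$ on $MC$, which forces $N^+(u)$ to strictly contain $N^-(v)$ --- but since $class(e_u)=class(e_v)$ they have equal budgets, giving a contradiction. (For nonterminal $u$, the reduction from ``$t_u$ in the interval'' to ``$u\in N(v)$'' uses $e_u\notin E_\sigma$ to conclude $c(\sigma_u)\le\low{e_u}/64=\low{e_v}/64$, which you do note.)

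Second, and more seriously, your treatment of the case in which $e_v = \sigma_v$ for a nonterminal $v$ is incorrect. You write that the introduction of $e_v$ triggers ``an \func{Absorb}$(v)$ call within some other \func{MainLoop},'' but when $\sigma_v$ is added on line~\ref{line:absorb-steiner} inside \func{Absorb}$(z)$ for some $z\ne v$ (because $v\in N(z)\setminus T^*$ is in $\Z$), there is \emph{no} \func{Absorb}$(v)$ call and no \func{MainLoop}$(e_v)$ at all; the cost of $\sigma_v$ is paid for only indirectly through the inflation device. In that scenario the deletion of $e_u$ cannot be attributed to an \func{Absorb}$(v)$, and the argument must instead go through $e_{\sigma_v}$, the first edge on $v$'s path before $\sigma_v$ replaced it: the inflation guarantees $class(e_v)=class(e_{\sigma_v})$, and one then case-splits on whether $e_u$ was added before or after $e_{\sigma_v}$. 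If before, \func{MainLoop}$(e_{\sigma_v})$'s absorb removed $e_u$; if after, then $v\in N(u)$ and the absorb for $e_u$ would have removed $e_{\sigma_v}$ and taken $v$ out of $\Z$, making the subsequent addition of $\sigma_v$ impossible. This case analysis is the crux of what makes the lemma nontrivial in the multicast setting, and it is entirely absent from your proposal.
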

\begin{proof}
  If $class(e_v) = class(e_u)$ and $N^+(v)$ and $N^+(u)$ overlap, then the vertex corresponding to the edge arriving first is in the neighborhood of the other.
  Suppose $N^+(v)$ and $N^+(u)$ overlap, $e_u$ preceded $e_v$, and $u \not\in N(v)$.
  If $u\not\in \T$, $c(\sigma_u) \le \frac{\low{e_u}}{64} = \frac{\low{e_v}}{64}$, and thus $t_u \in N(v)$ implies $u \in N(v)$.
  Then $t_u$ (or $u$ if $u \in \T$) must lie to the left of $N(v)$ in MC, which means that $N^+(u)$ strictly contains $N^-(v)$, a contradiction given that $class(e_v) = class(e_u)$.

  Suppose $e_u,e_v \in E^*$ with $class(e_v) = class(e_u)$.
  Suppose $e_u$ preceded $e_v$ and $N^+(v)$ and $N^+(u)$ overlap, which implies $u \in N(v)$ as shown above.
  We consider two cases for edge $e_v$, and derive contradictions in all cases.
  First, suppose \func{Mainloop}$(e_v)$ was run when $e_v$ was added.
  Then $e_u$ cannot exist after the completion of \func{Absorb}$(v)$, by definition of \func{Absorb}.
  
  If the \func{MainLoop}$(e_v)$ was not run, then $e_v = \sigma_w$ for some $w$, and $e_v$ was introduced because $w$ was in $\Z$.
  Let $e_{\sigma_w}$ be the first edge on $w$'s path to the root in the state just after $\sigma_w$ was added, which we denote $S_{\sigma_w}$ (this may be different from $w$'s current first edge, $e_w$).
  Then \func{MainLoop}$(e_{\sigma_w})$ caused the deletion of all edges $e_q$ such that $q \in N_{S_{\sigma_w}}(w)$.
  We consider two possible times when $e_u$ was added:
  If $e_u$ was added before $e_{\sigma_w}$, then \func{MainLoop}$(e_{\sigma_w})$ deleted $e_u$ since $u \in N_{S_{\sigma_w}}(w)$ because $class(e_u) = class(e_{\sigma_w})$.
  If $e_u$ was added after $e_{\sigma_w}$, $class(e_u) = class(e_{\sigma_w})$ implies that $w \in N(u)$.
  But then $e_{\sigma_w}$ was removed from the solution and $w$ was removed from $\Z$, so the addition of $\sigma_w = e_v$ was not possible.
\end{proof}

Given Lemma~\ref{lem:overlap}, the scheme from \cite{BiloFM13}  for distributing the cost of each $e_v$ to its neighborhood can be applied directly.
This gives us the following lemma, which along with Lemma \ref{lem:shortedge} establishes Theorem \ref{thm:main}.

\begin{lemma} \label{lem:charge-to-neighborhood}
The cost of each  $e_v \in E^*$ can be distributed to the edges in $N^+(v)$ (and its boundary) such that the total charge on any edge $e' \in T^*$ is $O(c(e'))$. 
\end{lemma}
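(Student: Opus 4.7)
The plan is to apply, essentially verbatim, the charging scheme of Bil\`o, Flammini, and Moscardelli \cite{BiloFM13} for broadcast games. Lemma~\ref{lem:overlap} is precisely the combinatorial separation hypothesis required by that scheme, and the remaining work is to check that the accounting transfers to our definition of $N^+(v)$. I would proceed in three steps.

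First, I would record the relevant parameters. For any $e_v \in E^*$ of class $\alpha$, we have $c(e_v) \leq 256^{\alpha+1}$, every edge of $\T$ lying in $N^+(v)$ has class at most $\alpha - 2$ (as observed in Section~\ref{sec:prelims}), and by the maximality of $I^+_{v,\low{e_v}/56}$ the weighted harmonic sum $\sum_\beta 256^{\beta+1} H^2_{n_\beta}$ with $n_\beta := n_{N^+(v),\beta}$ is $\Theta(\low{e_v})$; otherwise, the boundary edge just outside $N^+(v)$ (still in $\T$, since $MC$ is a walk through $\T$) alone carries cost $\Omega(\low{e_v})$ and can absorb the deficit. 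This identity furnishes the budget against which the charges will balance.

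Second, following the Bil\`o--Flammini--Moscardelli distribution, I would spread $c(e_v)$ over the $\T$-edges in $N^+(v)$ together with the boundary edge, using weights built from the class structure and the harmonic budget so that (i) the total charge emanating from $e_v$ is exactly $c(e_v)$, and (ii) each class-$\beta$ edge in $N^+(v)$ receives a share whose scale decays in $\alpha - \beta$ relative to $c(e')$. Finally, I would bound the charge absorbed by a fixed $e' \in \T$ of class $\beta$: by Lemma~\ref{lem:overlap}, at most one source $e_v \in E^*$ of each class $\alpha \geq \beta + 2$ contributes to $e'$; since $c(e') \geq 256^\beta$ and the per-source contribution is a factor at least $256^{\alpha-\beta}$ below its natural scale (the gap of $256$ between consecutive classes dominating the harmonic slackness), summing the resulting geometric series in $\alpha$ yields a total of $O(c(e'))$, as required.

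The main obstacle I anticipate is verifying that the presence of nonterminals $w \in \Z$ inside $N^+(v)$ does not interfere with the scheme, since their $\sigma_w$ edges need not lie in $\T$. This is resolved by the observation that $I^+_{v,\cdot}$ is contained in $MC$, a walk using only $\T$-edges: only those $\T$-edges (plus the boundary) receive charge, while the nonterminal members of $N^+(v)$ enter the accounting only through their associated $t_w \in I^+_{v,\cdot}$ and add no new edges into the charging structure. Thus the charging argument of \cite{BiloFM13} transfers without modification, and combining with Lemma~\ref{lem:shortedge} proves the lemma and hence Theorem~\ref{thm:main}.
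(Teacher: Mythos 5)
Your overall plan is the right one and matches the paper's: use Lemma~\ref{lem:overlap} for separation, spread each $c(e_v)$ over the $\T$-edges of $N^+(v)$ plus its boundary edge, and sum a rapidly decaying series over source classes. You are also correct that only $\T$-edges (and the boundary) receive charge, so nonterminals in $N(v)$ create no new charging targets.

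The genuine gap is in step~2/3, where you assert that each class-$\beta$ edge can be made to absorb a share that is a factor $256^{\alpha-\beta}$ below $256^\beta$, attributing this to ``the gap of $256$ between consecutive classes dominating the harmonic slackness.'' That decay does not fall out of the budget identity $\sum_\gamma 256^{\gamma+1} H^2_{n_{N^+(v),\gamma}} = \Theta(\low{e_v})$ by a proportional split: if you allocate $c(e_v)$ to class $\gamma$ in proportion to its contribution $256^{\gamma+1}H^2_{n_\gamma}$, the per-edge charge is $O(256^{\gamma+1}\,H^2_{n_\gamma}/n_\gamma)$, and since $H^2_n/n$ is maximized (at a constant) for small $n$, this gives no decay in $\alpha-\gamma$ whatsoever. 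The missing piece is the pigeonhole step the paper isolates as Claim~\ref{claim:heavy-class}: there must exist a single ``heavy'' class $\beta \le \alpha-2$ with $256^{\beta+1}H^2_{n_{N^+(v),\beta}} \ge 256^{\alpha-1}/256^{(\alpha-\beta)/2}$; this forces $n_{N^+(v),\beta} \ge e^{256^{(\alpha-\beta-4)/4}-1}$, i.e.\ \emph{doubly}-exponentially many class-$\beta$ edges, which is what makes the per-edge charge small enough to sum. The paper then charges $c(e_v)$ uniformly to that one class (plus the earlier boundary-edge case when $\mu \ge \alpha-1$), not across all classes. Your proposal also omits the degenerate case $N^+(v) = MC$, which the paper handles separately, though that is a minor point. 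So: right skeleton, right citations, but the key quantitative lemma is precisely the part you have waved away, and a proof would need to supply it.
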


\begin{proof}
Let $e_v \in E^*$. Let $\alpha = class(e_v)$. Throughout this proof we will be interested only in edges $N^+(v) \cap \T$. For simplicity, we will simply write $N^+(v)$ instead of $N^+(v) \cap \T$.

We first consider the case where $N^+(v) =MC=N(v)$. If $v$ is a terminal, then $e_v$ is the only edge not in $S_f \setminus E_\sigma$, since all other terminals are following their path in $\T$ to $v$. But all edges added by a critical move must have $c(e) \le c(\T)$ by definition. So in this case, $c(S_f \setminus E_\sigma) \le c(\T)$. If $v$ is a Steiner vertex, then $S_f \setminus E_\sigma$ can also include $\sigma_v$, since all terminals $u$ are using strategy $p_{\T}(u,t_v) \cup \sigma_v \cup p_v(S)$. Since $c(\sigma_v) \le c(e_v)$, we have $c(S_f \setminus E_\sigma) \le 2c(\T)$.

Now we consider the case where $N^+(v) \not= MC$. Recall that every edge in $N^+(v)$ has class at most $\alpha-2$. Let $a \not\in N^+(v)$ be the first edge to the right of $N^+(v)$ in $\T$, and let $\mu = class(a)$. There are two cases.

{\bf Case 1:} $\mu \ge \alpha -1$. In this case we can charge $c(e_v)$ to $a$. We show that only one edge of each class will get charged to $a$. Suppose that this is not the case, that there is some $e_{v'}$ with $class(e_{v'})=\alpha$, such that $a$ is the first edge to the right of $N^+(v')$ in $\T$. Then the edges of $N^+(v)$ have non-empty intersection with those of $N^+(v')$, contradicting Lemma \ref{lem:overlap}.

So the total cost charged to edge $a$ in this fashion is no greater than $\sum_{\gamma=0}^{\mu+1} 256^{\gamma+1} < 256^{\mu+3} \le 256^3c(a)$. Since each $e \in \T$ appears at most twice in $MC$, $e$ is charged a total cost of at most $2 \cdot256^3c(e)$.

{\bf Case 2:} $\mu \le \alpha-2$. In this case we charge $c(e_v)$ to a subset of the edges in $N^+(v)$. We first prove a technical claim.

\begin{claim}\label{claim:heavy-class}
	There exists some class $1 \le \beta \le \alpha-2$ such that 
	\begin{equation*}
		\frac{256^{\beta+1}H^2_{n_{N^+(v),\beta}}}{256^{\alpha-1}} \ge \frac{1}{256^{\frac{\alpha-\beta}{2}}}.
		\end{equation*}
\end{claim}

\begin{proof}
Assume, for contradiction, that there exists no such $\beta$. That is, $\frac{256^{\gamma+1}H^2_{n_{N^+(v),\gamma}}}{256^{\alpha-1}} < \frac{1}{256^{\frac{\alpha-\gamma}{2}}}$ for all $0 \le \gamma \le \alpha-2$. Then we can sum over all classes:
\begin{align}
	\sum_{\gamma=0}^{\alpha-2}256^{\gamma+1}H^2_{n_{N^+(v),\gamma}} &= \sum_{\gamma=0}^{\alpha-2} \frac{256^{\gamma+1}H^2_{n_{N^+(v),\gamma}}}{256^{\alpha-1}} 256^{\alpha-1}\nonumber \\
	&< 256^{\alpha-1} \sum_{\gamma=0}^{\alpha-2} \frac{1}{256^{\frac{\alpha-\gamma}{2}}}\nonumber \\
	&< 256^{\alpha-1} \label{eqn:heavy}.
\end{align}
However, we also know from maximality of $N^+(v)$ that
\begin{align*}
	2\left( \sum_{\gamma=0}^{\alpha-2}256^{\gamma+1}H^2_{n_{N^+(v),\gamma}} + 256^{\mu+1}H^2_{n_{N^+(v),\mu}+1} - 256^{\mu+1}H^2_{n_{N^+(v),\mu}} \right) \ge \frac{256^\alpha}{56},
	\end{align*}
	which implies that, using the fact that $H^2_{i+1}-H^2_i \le \frac{5}{4}$ for any $i \ge 0$,
\begin{align*}
	2 \sum_{\gamma=0}^{\alpha-2}256^{\gamma+1}H^2_{n_{N^+(v),\gamma}} &\ge \frac{256^\alpha}{56} - 2 \cdot 256^{\mu+1}(H^2_{n_{N^+(v),\mu}+1} - H^2_{n_{N^+(v),\mu}})\\
	&\ge \frac{256^\alpha}{56} - \frac{10}{4} 256^{\alpha-1} > 256^{\alpha-1},
\end{align*}
contradicting Equation \ref{eqn:heavy}.
\end{proof}

Consider all edges of class $\beta$. By using the inequality $H_i \le 1 + \ln i$ and rearranging the equation from Claim \ref{claim:heavy-class}, we get that
\begin{equation*}
	n_{N^+(v), \beta} \ge e^{\sqrt{256^{\frac{\alpha-\beta-4}{2}}}-1}.
\end{equation*}
We charge $c(e_v) \le 256^{\alpha +1}$ equally across all edges of class $\beta$. Therefore each edge of class $\beta$ is charged at most
\begin{equation*}
	256^{\alpha+1} \frac{1}{e^{\sqrt{256^{\frac{\alpha-\beta-4}{2}}}-1}} = 256^\beta \frac{256^{\alpha-\beta+1}}{e^{\sqrt{256^{\frac{\alpha-\beta-4}{2}}}-1}}.
\end{equation*}

Suppose that any other edge $e_{v'}$ of class $\alpha$ is (partially) charged to some edge $a$ that $e_v$ has also been partially charged to. Then $a \in N^+(v)$ and $a \in N^+(v')$ overlap, a contradiction to Lemma \ref{lem:overlap}. Therefore the total amount charged to $a$ is at most
\begin{align*}
	\sum_{\gamma \ge \beta+2} 256^\beta \frac{256^{\alpha-\beta+1}}{e^{\sqrt{256^{\frac{\alpha-\beta-4}{2}}}-1}} &= 256^\beta \sum_{z \ge 0} \frac{256^{z+3}}{e^{\sqrt{256^{\frac{z-2}{2}}}-1}}\\
	&= O(256^\beta)\\
	&= O(c(a)).
\end{align*}
Since each edge $e$ appears in $MC$ at most twice, the total cost of $E^*$ from this type of charging is $O(c(\T))$. This proves the lemma.
\end{proof}

\bibliographystyle{plain}
\bibliography{ref}

\end{document}